\def\norm #1{\left\|#1\right\|}
\def\inftyn #1{\left\|#1\right\|_{\infty}}
\def\twon #1{\left\|#1\right\|_2}
\def\onen #1{\left\|#1\right\|_1}
\def\zeron #1{\left\|#1\right\|_0}
\def\sgn #1{\text{sgn}#1}
\def\abs #1{\left|#1\right|}
\def\inp #1{\left\langle#1\right\rangle}
\def\st{\textrm{ subject to }}
\def\bR{\mathbb{R}}
\def\m #1{\boldsymbol{#1}}
\def\cD{\mathcal{D}}
\def\cI{\mathcal{I}}
\def\cK{\mathcal{K}}
\def\cN{\mathcal{N}}
\def\cP{\mathcal{P}}
\def\cQ{\mathcal{Q}}
\def\bee{\begin{equation}}
\def\ene{\end{equation}}
\def\beq{\begin{eqnarray}}
\def\enq{\end{eqnarray}}
\def\lentwo{\setlength\arraycolsep{2pt}}
\newtheorem{lem}{Lemma}
\newtheorem{rem}{Remark}
\def\equ #1{\begin{equation}#1\end{equation}}
\def\equa #1{\begin{eqnarray}#1\end{eqnarray}}
\def\sbra #1{\left(#1\right)}
\def\mbra #1{\left[#1\right]}
\def\lbra #1{\left\{#1\right\}}
\def\diag #1{\text{diag}#1}
\def\st {\text{ subject to }}
\begin{document}

\title{Variational Bayesian Algorithm for Quantized Compressed Sensing}
\author{Zai Yang, Lihua Xie$^*$, {\em Fellow, IEEE,} and Cishen Zhang

\thanks{$^*$Author for correspondence.

Z. Yang and L. Xie are with EXQUISITUS, Centre for E-City, School of Electrical and Electronic Engineering, Nanyang Technological University, 639798, Singapore (e-mail: yang0248@e.ntu.edu.sg; elhxie@ntu.edu.sg).

C. Zhang is with the Faculty of Engineering and Industrial Sciences, Swinburne University of Technology, Hawthorn VIC 3122, Australia (e-mail: cishenzhang@swin.edu.au).}}


%


\maketitle

\begin{abstract} Compressed sensing (CS) is on recovery of high dimensional signals from their low dimensional linear measurements under a sparsity prior and digital quantization of the measurement data is inevitable in practical implementation of CS algorithms. In the existing literature, the quantization error is modeled typically as additive noise and the multi-bit and 1-bit quantized CS problems are dealt with separately using different treatments and procedures. In this paper, a novel variational Bayesian inference based CS algorithm is presented, which unifies the multi- and 1-bit CS processing and is applicable to various cases of noiseless/noisy environment and unsaturated/saturated quantizer. By decoupling the quantization error from the measurement noise, the quantization error is modeled as a random variable and estimated jointly with the signal being recovered. Such a novel characterization of the quantization error results in superior performance of the algorithm which is demonstrated by extensive simulations in comparison with state-of-the-art methods for both multi-bit and 1-bit CS problems.

\end{abstract}

\begin{keywords}
Quantized compressed sensing, 1-bit compressed sensing, unified framework, sparse Bayesian learning, variational message passing
\end{keywords}



%

\section{Introduction}
The recently developed compressed sensing (CS) theory and methods \cite{candes2006compressive,donoho2006compressed} can achieve acquisition of information contained within a huge volume of data using only a small number of measurement samples. Different from the classical Shannon-Nyquist sampling theorem which requires that the sampling frequency be twice as high as the bandwidth of a signal in order to reconstruct its complete information, the CS theory accesses the success of signal recovery with the sparsity. A signal $\m{x}\in\bR^N$ of length $N$ is called $K$-sparse in a basis $\m{\Psi}\in\bR^{N\times N}$ if all but at most a number of $K\ll N$ entries of its coefficient vector $\m{\theta}\in\bR^N$ are zero with $\m{x}=\m{\Psi}\m{\theta}$. Without loss of generality we assume that $\m{\Psi}$ is an identity matrix, i.e., $\m{x}$ is sparse in the canonical basis, since for a general basis $\m{\Psi}$ it can be absorbed into the following introduced sensing matrix $\m{A}$. Rather than observing directly the original sparse signal $\m{x}$, a number of $M$, $K<M\ll N$, linear measurements are acquired in CS as
\equ{\m{y}=\m{A}\m{x}+\m{n}, \label{formu:observation}}
where $\m{y}\in\bR^M$ is the measurement vector, $\m{A}\in\bR^{M\times N}$ denotes the sensing/measurement matrix and $\m{n}\in\bR^M$ is the measurement noise vector. Though the recovery of $\m{x}$ from $\m{y}$ is generally ill-posed (less linear equations than the unknown variables), it is shown in \cite{candes2006stable} that a sparse signal $\m{x}$ can be stably recovered under mild conditions on $\m{A}$ in the sense that the recovery error grows linearly with the noise level. To do this, the basis pursuit denoising (BPDN) problem
\equ{\min\onen{\tilde{\m{x}}},\st\twon{\m{y}-\m{A}\tilde{\m{x}}}\leq\epsilon \label{formu:BPDN}}
is solved where $\epsilon\geq\twon{\m{n}}$ indicates the noise level. The recovery is exact in the noise free case. A similar result holds for compressible signals that are not exactly sparse.

Sparse Bayesian learning (SBL) \cite{tipping2001sparse,wipf2004sparse,ji2008bayesian} was derived from the research area of machine learning and has become a popular method for sparse signal recovery in CS. In SBL, the sparse signal recovery problem is formulated from a Bayesian perspective while the sparsity information is exploited by assuming a sparse prior for the signal of interest. As an example, a Laplace prior \cite{tibshirani1996regression,babacan2010bayesian} corresponds to the $\ell_1$ norm which has been widely studied in existing optimization approaches. Since exact Bayesian inference is typically intractable, approximation approaches to Bayesian inference have been adopted including evidence procedure \cite{mackay1992bayesian}, e.g., in \cite{babacan2010bayesian}, and variance message passing (VMP) \cite{winn2006variational}, e.g., in \cite{pedersen2011sparse}. One merit of Bayesian CS is the flexibility of modeling sparse signals that can not only promote the sparsity of its solution, but also exploit additionally known structures of the sparse signal, see, e.g., \cite{yang2012bayesian,he2009exploiting}. Since the Bayesian inference is a probabilistic method and based on heuristics to some extent, one shortcoming of Bayesian approaches is that there have been fewer results on their signal recovery accuracy in comparison with deterministic approaches, e.g., BPDN.

The conventional CS framework is mainly focused on the sparse signal recovery from the real-valued measurement $\m{y}$ that has infinite bit precision. The required number of measurements $M$ is mainly studied for guaranteed signal recovery accuracy \cite{donoho2011noise,stojnic2009various,yang2012phase,maleki2013asymptotic}. Since quantization is necessary for practical considerations, e.g., data storage and transmission, we study the sparse signal recovery from quantized measurements in this paper. During the quantization process, each continuous-valued measurement is quantized into some value in a finite set. A new challenge is thus the existence of quantization errors. The noise free case with a uniform unsaturated quantizer is studied in \cite{candes2006near,candes2006stable,jacques2011dequantizing}. A solver with quantization consistency is recommended in \cite{candes2006near} that corresponds to replacing the $\ell_2$ norm in BPDN by the $\ell_\infty$ norm. A BPDN solver is used in \cite{candes2006stable} which treats the quantization errors as additive noises with bounded energy. A family of solvers, named as basis pursuit dequantizer of moment $p$ (BPDQ$_p$), that includes BPDN and that in \cite{candes2006near} as special cases is studied in \cite{jacques2011dequantizing} where the $\ell_2$ norm in BPDN is replaced by an $\ell_p$ norm with $2\leq p\leq+\infty$. By characterizing the quantization errors as independent random variables uniformly distributed in a common interval, it is shown in \cite{jacques2011dequantizing} that the optimal signal recovery accuracy is obtained at some finite $p\geq2$. But unfortunately, the optimal $p$ cannot be explicitly given in practice.
Note that the common uniform distribution assumption is crucial to obtain the results in \cite{jacques2011dequantizing}. As a result, it is unclear whether the results in \cite{jacques2011dequantizing} can be extended to a general quantizer case where such an assumption fails. It is obvious that both BPDN and BPDQ$_p$ are inappropriate in the case of a saturated quantizer since data saturation may lead to large or even unbounded quantization errors that deteriorate their performance. To deal with the data saturation, Laska {\em et al.} \cite{laska2011democracy} propose two modified versions of BPDN which either reject saturated measurements or incorporate them into signal recovery. While quantization errors and measurement noises are coupled in most existing methods (some methods, e.g., BPDQ$_p$, consider only the noise free case to avoid such a problem), e.g., in \cite{laska2011democracy}, they are separately studied by Zymnis {\em et al.} \cite{zymnis2010compressed} where the authors seek to find a signal estimate that maximizes the likelihood of the quantized measurements while the $\ell_1$ norm is used to promote the signal sparsity. The resulting algorithm is quoted as $\ell_1$-regularized maximum likelihood (L1RML).

An extreme case of quantized CS is so-called 1-bit CS where each quantized measurement keeps only the sign information of the real-valued measurement and thus uses just one bit. The 1-bit CS framework is proposed in \cite{boufounos2008bit} and has attracted many research interests because it possesses many merits. For example, a 1-bit quantizer is a simple comparator that tests whether the measurement is above or below zero, leading to an easy implementation and a fast quantization process. A measurement noise can be neglected in 1-bit CS as long as it does not change the sign of the measurement. It is shown in \cite{laska2012regime} that to acquire just one bit for each measurement is optimal in the presence of heavy noises. The 1-bit case is quite different from the multi-bit case since all measurements are saturated in 1-bit CS and the signal scaling information is lost. A common approach to the signal scaling problem is to impose that the signal to be recovered has a fixed unit norm and then search for the signal on the unit hyper-sphere rather than in the whole space. Such a constraint is nonconvex and brings new challenges to algorithm design. Existing algorithms based on this constraint include renormalized fixed point iteration (RFPI) \cite{boufounos2008bit}, matching sign pursuit (MSP) \cite{boufounos2009greedy}, restricted-step shrinkage (RSS) \cite{laska2011trust} and binary iterative hard thresholding (BIHT) \cite{jacques2011robust}. Convex formulations of the 1-bit CS problem have been recently proposed by Plan and Vershynin \cite{plan2011one,plan2012robust}. They show in \cite{plan2011one} that a linear program can decode the noiseless case with guaranteed signal recovery accuracy under similar mild conditions as in conventional CS. In \cite{plan2012robust} they introduce a seemingly unrelated convex program for the noisy case and show similar results. It is noted that both the BIHT and the convex program in \cite{plan2012robust} that deal with the noisy 1-bit CS problem require the signal sparsity information (BIHT needs the signal sparsity $K$ and CVXP requires a proper upper bound for the signal's $\ell_1$ norm).

In this paper, we introduce a Bayesian framework for quantized CS that unifies the multi- and 1-bit cases. The new framework deals with quantization errors and measurement noises separately, allows data saturation in multi-bit CS, and does not need the signal sparsity information. Based on the new problem formulation, we propose an algorithm within the Bayesian CS framework where the quantization errors are modeled as random variables and jointly estimated with the signal of interest. A three-layer hierarchical prior introduced in \cite{pedersen2011sparse} is adopted as the sparse signal prior and variational Bayesian inference is carried out using VMP. The performance of the proposed algorithm is studied by extensive numerical simulations in various scenarios. It is shown that the new algorithm improves the signal recovery accuracy in comparison with state-of-the-art methods in both multi- and 1-bit CS. Part of the results of this paper have been presented in \cite{yang2012accurate}.

Notations used in this paper are as follows. Bold-face letters are reserved for vectors and matrices. For ease of exposition, we do not distinguish a random variable from its numerical value. $x_i$ is the $i$th entry of a vector $\m{x}$. $\m{x}_{\cI}$ denotes a truncated vector of $\m{x}$ with entry indices in a set $\cI$. $\m{A}_i$ is the $i$th column of a matrix $\m{A}$. $\zeron{\m{x}}$ counts the number of nonzero entries of a vector $\m{x}$. $\norm{\m{x}}_p=\sbra{\sum_i \abs{x_i}^p}^{1/p}$ denotes the $\ell_p$ norm of a vector $\m{x}$ with $1\leq p\leq+\infty$. $\inp{g\sbra{x}}_{p\sbra{x}}$ denotes the expectation of a function $g\sbra{x}$ with respect to a random variable $x$ whose probability density function is $p\sbra{x}$. $\odot$ denotes the Hardamard (elementwise) product. $\succcurlyeq$ and $\preccurlyeq$ denote $\geq$ and $\leq$ respectively with an elementwise operation.

The rest of the paper is organized as follows. Section \ref{sec:model} introduces the new framework for quantized CS and studies its relations with existing formulations. Section \ref{sec:QVMP} introduces the proposed Q-VMP algorithm. Section \ref{sec:simulation} presents numerical simulations to illustrate the improved signal recovery accuracy of the proposed Q-VMP algorithm in comparison with existing ones. Section \ref{sec:conclusion} concludes the paper and discusses some future works.

\section{A New Framework for Quantized CS} \label{sec:model}
Multi- and 1-bit CS problems are typically studied separately in the literature due to their big difference. In this section, we propose a Bayesian framework that unifies both cases. The new framework is applicable to various scenarios including noiseless/noisy environment and unsaturated/saturated quantizer. Its relations with existing methods are studied through a maximum {\em a posteriori} (MAP) interpretation.

\subsection{A Unified Observation Model}
In quantized CS, the observed samples are noisy linear measurements of the original signal after quantization:
\equ{\m{z}=\cQ\sbra{\m{y}},\quad\m{y}=\m{A}\m{x}+\m{n} \label{formu:OriginObservModel}}
where $\m{x}$ is the signal of interest, $\m{A}$ is the sensing matrix, $\m{n}$ is the measurement noise vector, $\m{y}$ is the pre-quantized noisy measurement vector, $\cQ$ denotes a quantizer and $\m{z}$ is the observation. A quantizer $\cQ\sbra{v}$ for a scalar $v\in\bR$ is defined as
\equ{\cQ\sbra{v}=\left\{\begin{array}{ll}v_0,& \text{if }v\in\sbra{u_0,u_1},\\ v_1,& \text{if }v\in\left[u_1,u_2\right),\\\cdots,&\cdots,\\v_{L-1}, &\text{if }v\in\left[u_{L-1},u_L\right), \end{array}\right. \label{formu:quantizer}}
where $L$ denotes the number of the quantization levels and typically satisfies $L=2^B$ with $B$ denoting the bit depth (bits per quantized measurement), $u_0<u_1<\cdots<u_{L}$, and $v_i\in\left[u_{i},u_{i+1}\right)$ for $i=0,\cdots,L-1$. The quantizer $\cQ\sbra{v}$ is called unsaturated if $\sbra{u_0,u_{L}}$ is a finite interval, or saturated otherwise. For a vector $\m{v}$, $\cQ\sbra{\m{v}}$ operates elementwise. Multi-bit CS refers to the case $B\geq2$ while 1-bit CS corresponds to $B=1$.

\subsubsection{Multi-bit CS}
We consider first a multi-bit quantizer where $B\geq2$. Denote $\cD_{y}$ the domain of $\m{y}$. Then we have
\equ{\cD_{y}=\cQ^{-1}\sbra{\m{z}}:=\lbra{\m{y}\in\bR^M|\cQ\sbra{\m{y}}=\m{z}}.\label{formu:domain_y}}
We introduce an auxiliary variable $\m{e}=\m{z} - \m{y}$ denoting the quantization error with its domain
\equ{\cD_{e}=\m{z}-\cD_y:=\lbra{\m{z}-\m{y}|\m{y}\in\cD_y}. \label{formu:domain_e}}
Note that $\cD_{e}$ is unbounded when data saturation occurs.

\subsubsection{1-bit CS}
In the case of 1-bit quantizer we set $u_0=-\infty$, $u_1=0$ and $u_2=+\infty$. The sign information of $\m{y}$ is preserved in the quantized measurement $\m{z}$. But the scaling information of $\m{y}$ and that of $\m{x}$ is lost. Without loss of generality, we let the 1-bit quantizer
\[\cQ\sbra{v}=\varsigma\sgn\sbra{v}\]
for a scalar $v\in\bR$ with $\varsigma\rightarrow0_+$ ($\varsigma$ is an arbitrarily small positive number) and $\sgn\sbra{\cdot}$ being the sign function. For convenience, we set $\sgn\sbra{0}=1$ (the choice is arbitrary and can be replaced by $\sgn\sbra{0}=-1$). Then we have $\m{z}\rightarrow\m{0}$. To solve the signal scaling problem we impose a constraint that $\m{y}$ has fixed unit norm, i.e.,
\equ{\norm{\m{y}}_s=1 \label{formu:norm_y_1bit}}
with $s\geq1$. Different from the multi-bit quantizer case we have in such a case that
\lentwo\equa{\cD_{y}
&=&\lbra{\m{y}\in\bR^M|\sgn\sbra{\m{y}}=\sgn\sbra{\m{z}},\norm{\m{y}}_s=1},\label{formu:domain_y_1bit}\\ \cD_e
&=&\lbra{\m{e}\in\bR^M|\sgn\sbra{\m{e}}=-\sgn\sbra{\m{z}},\norm{\m{e}}_s=1}. \label{formu:domain_e_1bit}}

As a result, an observation model that unifies the multi- and 1-bit CS problems can be written into
\equ{\m{z}=\m{A}\m{x}+\m{e}+\m{n},\quad \m{e}\in\cD_e, \label{formu:ObservationModel}}
which is the observation model to be used in this paper to recover $\m{x}$.

\begin{rem} In practice, one is able to know the domain of $\m{e}$ but it is difficult to characterize its exact relationship with $\m{x}$. Thus, the dependence of $\m{e}$ on $\m{x}$ is dropped when we write (\ref{formu:OriginObservModel}) into (\ref{formu:ObservationModel}) which is the observation model we use for the signal recovery, i.e., the only information we attempt to exploit during the signal recovery process is its domain. We note that the signal recovery performance may be further improved if the dependence can be properly exploited which, however, is rather difficult.
 \label{rem:modelredundancy}
\end{rem}

\subsection{Bayesian Formulation of Quantized CS}\label{sec:sparseBayesianmodel}
In this subsection we formulate the quantized CS problem from a Bayesian perspective based on the observation model in (\ref{formu:ObservationModel}). According to Remark \ref{rem:modelredundancy} we treat $\m{e}$ as a random variable independent of $\m{x}$. The joint probability density function (PDF) $p\sbra{\m{z},\m{x},\m{e}}$ is decomposed as
\[p\sbra{\m{z},\m{x},\m{e}}=p\sbra{\m{z}|\m{x},\m{e}}p\sbra{\m{x}}p\sbra{\m{e}}.\]
We define the three distributions on the right hand side as follows.

\subsubsection{Noise model} Under an assumption of white Gaussian measurement noise, i.e., $\m{n}\sim\cN\sbra{\m{0},\sigma^2\m{I}}$ where $\sigma^2$ is the noise variance and $\m{I}$ denotes an identity matrix of proper dimension, we have
\equ{p\sbra{\m{z}|\m{x},\m{e};\sigma^2}=\cN\sbra{\m{z}|\m{A}\m{x}+\m{e},\sigma^2\m{I}}. \label{formu:prior_n}}

\subsubsection{Sparse signal model} A sparse prior is needed for the sparse signal $\m{x}$ of interest. Here we do not give an explicit distribution to the sparse signal $\m{x}$ but denote $p\sbra{\m{x}}$ its PDF. Then we let $f\sbra{\m{x}}=-C_1\log p\sbra{\m{x}}+C_2$ where $C_1$ and $C_2$ are proper constants. The only thing that we assume for $p\sbra{\m{x}}$ is that it favors entries of $\m{x}$ being zeros. As an example, a commonly used sparse prior for $\m{x}$ is a Laplace prior \cite{tibshirani1996regression,park2008bayesian}: $p\sbra{\m{x}}=\lambda^N\exp\lbra{-\lambda\onen{\m{x}}}$ with $\lambda$ being a positive constant. In such a case, we have $f\sbra{\m{x}}=\onen{\m{x}}$ that has been extensively studied in deterministic optimization methods.

\subsubsection{Quantization error model} We assume a uniform, noninformative prior for $\m{e}$:
\equ{\m{e}\sim U\sbra{\cD_e} \label{formu:prior_e}}
since the only information of $\m{e}$ that we use is $\m{e}\in\cD_e$.

\begin{rem} The uniform prior may not characterize well the quantization error in the case of a very small bit depth $B$. But it is noted that a sophisticated prior needs more information besides the domain $\cD_e$ which is difficult to obtain.
\end{rem}

To obtain an MAP estimator of $\m{x}$ requires to integrate out $\m{e}$ from $p\sbra{\m{z},\m{x},\m{e}}$ that is computationally intractable. We propose to estimate $\m{x}$ and $\m{e}$ simultaneously using their joint MAP estimator:
\equ{\begin{split}\lbra{\widehat{\m{x}},\widehat{\m{e}}}
&=\arg\max_{\m{x},\m{e}} \log p\sbra{\m{x},\m{e}|\m{z}} \\
&=\arg\max_{\m{x},\m{e}} \log p\sbra{\m{z},\m{x},\m{e}} \\
&=\arg\max_{\m{x},\m{e}} \log\lbra{p\sbra{\m{z}|\m{x},\m{e}}p\sbra{\m{x}}p\sbra{\m{e}}}\\
&=\arg\min_{\m{x},\m{e}\in\cD_e} \lbra{f\sbra{\m{x}}+\frac{C_1}{2\sigma^2}\twon{\m{z}-\m{e}-\m{A}\m{x}}^2}. \end{split} \label{formu:MAPestimator}}
An equivalent form of the problem in (\ref{formu:MAPestimator}) is
\equ{\min_{\tilde{\m{x}},\tilde{\m{e}}}f\sbra{\tilde{\m{x}}},\st\left\{\begin{array}{l} \twon{\m{z}-\tilde{\m{e}}-\m{A}\tilde{\m{x}}}\leq\epsilon,\\\tilde{\m{e}}\in\cD_e,\end{array}\right. \label{formu:uniformoptimization}}
where $\epsilon$ is a proper scalar that controls the noise energy. The first constraint in (\ref{formu:uniformoptimization}) is to ensure the data consistency against the measurement noise. In multi-bit CS, the second one concerns data consistency due to quantization. In 1-bit CS, an additional signal scaling constraint is included in the second constraint that prevents an optimal solution for $\m{x}$ from $\m{0}$. Before proceeding to our algorithm within the framework of Bayesian CS, we study in the next subsection relations of the proposed Bayesian framework with existing methods.

\subsection{Relations with Existing Methods in Quantized CS}
We first note that problem (\ref{formu:uniformoptimization}) is equivalent to the problem
\equ{\min_{\tilde{\m{x}},\tilde{\m{y}}}f\sbra{\tilde{\m{x}}},\st\left\{\begin{array}{l} \twon{\tilde{\m{y}}-\m{A}\tilde{\m{x}}}\leq\epsilon,\\\tilde{\m{y}}\in\cD_y.\end{array}\right. \label{formu:opt_multi-bit}}
In the following we show that many existing problem formulations of quantized CS are special cases of or related to (\ref{formu:opt_multi-bit}).

\subsubsection{Multi-bit CS}
We consider the case of $\ell_1$ optimization where $f\sbra{\m{x}}=\onen{\m{x}}$. In the noise free case where $\epsilon=0$, the problem in (\ref{formu:opt_multi-bit}) can be written into
\equ{\min_{\tilde{\m{x}}}\onen{\tilde{\m{x}}},\st\m{A}\tilde{\m{x}}\in\cD_y, \label{formu:opt_multi-bit_noisefree}}
which has been studied in \cite{dai2011information}. Further, by assuming that $\cQ$ is a uniform unsaturated quantizer the above problem can be written into
\equ{\min_{\tilde{\m{x}}}\onen{\tilde{\m{x}}},\st\inftyn{\m{z}-\m{A}\tilde{\m{x}}}\leq \frac{r}{2} \label{formu:opt_multi-bit_noisefree_unifquan}}
which is studied in \cite{candes2006near,jacques2011dequantizing} with $r$ denoting the quantization bin width. While existing methods that account for measurement noise typically mix it up with the quantization error, e.g., in \cite{laska2011democracy}, problem (\ref{formu:opt_multi-bit}) extends existing noise free formulations to the noisy case by dealing with the two uncertainties separately.

\begin{rem} Under the assumption that all quantization errors are independent and uniformly distributed in a common interval $\mbra{-\frac{r}{2},\frac{r}{2}}$, it is shown in \cite{jacques2011dequantizing} that the $\ell_{\infty}$ norm in problem (\ref{formu:opt_multi-bit_noisefree_unifquan}) is not the best choice for the signal recovery. But it is unclear whether the result in \cite{jacques2011dequantizing} can be extended to the case of a general quantizer where the above assumption fails. It is noted that our problem formulation does not require this assumption and applies to an arbitrary quantizer. That is, by losing some optimality, we have obtained the universality.
\end{rem}

\subsubsection{1-bit CS} In 1-bit CS (\ref{formu:opt_multi-bit}) becomes
\equ{\min_{\tilde{\m{x}},\tilde{\m{y}}}f\sbra{\tilde{\m{x}}},\st\left\{\begin{array}{l} \twon{\tilde{\m{y}}-\m{A}\tilde{\m{x}}}\leq\epsilon,\\\sgn\sbra{\tilde{\m{y}}}=\sgn\sbra{\m{z}},\\ \norm{\tilde{\m{y}}}_s=1.\end{array}\right. \label{formu:opt_1-bit}}
In the noise free case, it can be written into
\equ{\min_{\tilde{\m{x}}}f\sbra{\tilde{\m{x}}},\st\left\{\begin{array}{l} \sgn\sbra{\m{A}\tilde{\m{x}}}=\sgn\sbra{\m{z}},\\ \norm{\m{A}\tilde{\m{x}}}_s=1.\end{array}\right. \label{formu:opt_1-bit_noisefree}}
This problem with the settings $f\sbra{\m{x}}=\onen{\m{x}}$ and $s=1$ can be shown to be convex and has been studied in \cite{plan2011one}. So by (\ref{formu:opt_1-bit}) we extend (\ref{formu:opt_1-bit_noisefree}) to the noisy case while the authors of \cite{plan2011one} state in \cite{plan2012robust} that ``it was unclear how to modify the above convex program to account for possible noise.''

The third constraint in (\ref{formu:opt_1-bit}) serves to prevent the optimal solution for $\m{x}$ from $\m{0}$. If replacing it by $\norm{\tilde{\m{x}}}_2=1$, then problem (\ref{formu:opt_1-bit}) can be shown to be equivalent to the problem
\equ{\min_{\tilde{\m{x}}}f\sbra{\tilde{\m{x}}},\st\left\{\begin{array}{l} \twon{\sbra{\sgn\sbra{\m{z}}\odot\m{A}\tilde{\m{x}}}_-}\leq\epsilon,\\ \norm{\tilde{\m{x}}}_2=1,\end{array}\right. \label{formu:opt_1-bit3}}
where $\sbra{v}_-=\max\lbra{-v,0}$ for a scalar $v$ and operates elementwise for a vector. In the noise free case (\ref{formu:opt_1-bit3}) becomes
\equ{\min_{\tilde{\m{x}}}f\sbra{\tilde{\m{x}}},\st\left\{\begin{array}{l} \sgn\sbra{\m{A}\tilde{\m{x}}}=\sgn\sbra{\m{z}},\\ \norm{\tilde{\m{x}}}_2=1,\end{array}\right. \label{formu:opt_1-bit_noisefree2}}
which with $f\sbra{\m{x}}=\onen{\m{x}}$ is the earliest formulation of the 1-bit CS problem introduced in \cite{boufounos2008bit} and solved using RFPI in \cite{boufounos2008bit} and RSS in \cite{laska2011trust}. Assume that the signal sparsity information is known instead of the noise energy, another formulation of (\ref{formu:opt_1-bit3}) is to pose $\twon{\sbra{\sgn\sbra{\m{z}}\odot\m{A}\tilde{\m{x}}}_-}$ as the objective function and $f\sbra{\tilde{\m{x}}}\leq S$ as a constraint, where the constant $S$ refers to the sparsity information. Such kind of formulations have been studied in \cite{boufounos2008bit,boufounos2009greedy, jacques2011robust}. In addition, the convex program in \cite{plan2012robust} is related by observing that $\sgn^T\sbra{\m{z}}\m{A}\tilde{\m{x}}= \onen{\sbra{\sgn\sbra{\m{z}}\odot\m{A}\tilde{\m{x}}}_+} -\onen{\sbra{\sgn\sbra{\m{z}}\odot\m{A}\tilde{\m{x}}}_-}$, where $(v)_+=\max\lbra{v,0}$.

\begin{rem} By (\ref{formu:opt_1-bit3}) we see that the effective noise is $\sbra{\sgn\sbra{\m{z}}\odot\m{A}\m{x}}_-$ in 1-bit CS where by ``effective noise'' we refer to a noise that has the minimum energy and leads to the same measurement. Since its energy is much smaller than that of the true noise $\m{y}-\m{A}\m{x}$, from this point of view, we may say that 1-bit CS is robust to the measurement noise. \label{rem:robust_1bit}
\end{rem}

\section{Q-VMP: Variational Message Passing for Quantized CS} \label{sec:QVMP}

\subsection{Model Selection}
We assume that the noise variance $\sigma^2$ is known. Though it can be estimated by assuming an inverse Gamma prior for it in the case where it is unknown as in \cite{babacan2010bayesian,yang2013off}, its estimate is inaccurate due to an ``identifiability issue'' as addressed in \cite{wipf2007empirical}. For the sparse signal $\m{x}$, we adopt a three-layer, Gaussian-Gamma-Gamma, hierarchical prior  introduced in \cite{pedersen2011sparse}:
\[p\sbra{\m{x};\epsilon,c,d}=\iint p\sbra{\m{x}|\m{\alpha}}p\sbra{\m{\alpha}|\eta;\epsilon}p\sbra{\eta;c,d}\,d\m{\alpha}\,d\eta\]
where
{\lentwo\equa{p\sbra{\m{x}|\m{\alpha}}
&=&\cN\sbra{\m{x}|\m{0},\m{\Lambda}},\label{formu:prior_x}\\p\sbra{\m{\alpha}|\eta;\epsilon}
&=&\prod_{i=1}^{N}\Gamma\sbra{\alpha_i|\epsilon,\eta},\label{formu:prior_alpha}\\p\sbra{\eta;c,d}
&=&\Gamma\sbra{\eta|c,d}\label{formu:prior_eta}
}}with $\m{\Lambda}=\diag\sbra{\m{\alpha}}$ and constants $\epsilon$, $c$, $d$. For a Gamma distributed variable $u\sim\Gamma\sbra{c,d}$, its PDF is
$\Gamma\sbra{u|c,d}=\frac{d^c}{\Gamma\sbra{c}}u^{c-1}\exp\sbra{-du}$
with $\Gamma\sbra{c}$ being the Gamma function. By \cite{pedersen2011sparse} the constants $\epsilon$, $c$, $d$ satisfy that $0\leq\epsilon\leq1$, $c,d\geq0$. In this paper, we adopt $c=1$, $d=0$ to make the prior for $\eta$ in (\ref{formu:prior_eta}) noninformative (flat on $\bR_+$). Further, we choose $\epsilon=0$ since a smaller $\epsilon$ leads to a sparser prior and an estimator that approximates a hard-thresholding rule according to \cite{pedersen2011sparse}. Readers are referred to \cite{pedersen2011sparse} for more properties of the Gaussian-Gamma-Gamma prior and its relations with other sparse estimation techniques.

In 1-bit CS, we let $\m{y}$ have unit $\ell_2$ norm in (\ref{formu:norm_y_1bit}), leading to that $\twon{\m{e}}=1$ in (\ref{formu:domain_e_1bit}). As a result, we have the joint PDF of the observation model (\ref{formu:ObservationModel}):
\equ{p\sbra{\m{z},\m{x},\m{e},\m{\alpha},\eta} = p\sbra{\m{z}|\m{x},\m{e}}p\sbra{\m{x}|\m{\alpha}}p\sbra{\m{\alpha}|\eta} p\sbra{\eta}p\sbra{\m{e}} \label{formu:jointpdf}}
with the distributions on the right hand side as defined respectively by (\ref{formu:prior_n}), (\ref{formu:prior_x}), (\ref{formu:prior_alpha}), (\ref{formu:prior_eta}) and (\ref{formu:prior_e}). A directed graphical model that encodes the factorization of the joint PDF in (\ref{formu:jointpdf}) is shown in Fig. \ref{Fig:factorgraph}.

\begin{figure}
  \centering
  \includegraphics[width=3in]{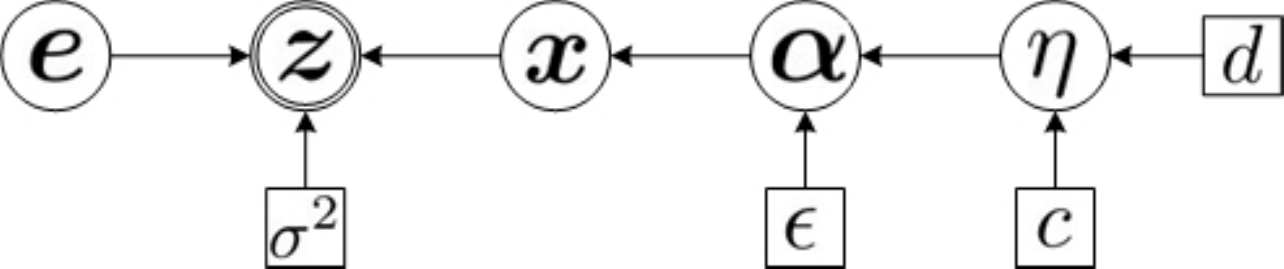}
  \caption{Directed graphical model that encodes the joint PDF in (\ref{formu:jointpdf}) of the Bayesian model. Nodes denoted with circles correspond to random variables, while nodes denoted with squares correspond to parameters of the model. Doubly circled $\m{z}$ is the observation while single circled nodes represent hidden variables.}
  \label{Fig:factorgraph}
\end{figure}

\subsection{Q-VMP Algorithm}\label{sec:Bayesianinference}
It is known that Bayesian inference is based on the posterior distribution
$p\sbra{\m{x},\m{e},\m{\alpha},\eta|\m{z}} = p\sbra{\m{z},\m{x},\m{e},\m{\alpha},\eta}/p\sbra{\m{z}}$.
However, such an exact posterior distribution is intractable since
$p\sbra{\m{z}}=\idotsint p\sbra{\m{z},\m{x},\m{e},\m{\alpha},\eta}\,d\m{x}\,d\m{e}\,d\m{\alpha}\,d\eta$
cannot be expressed explicitly.

A variational inference approach \cite{jordan1999introduction,tzikas2008variational} is adopted in this paper. Denote $\m{V}=\lbra{\m{x},\m{e},\m{\alpha},\eta}$ the set of all unknown variables to be estimated. The goal in variational inference is to find a tractable distribution $q\sbra{\m{V}}$ that closely approximates the true posterior distribution $p\sbra{\m{V}|\m{z}}$. To do this, some family of distributions that has enough flexibility is firstly chosen to represent $q\sbra{\m{V}}$. Then the task is to find a member of the family that minimizes the Kullback-Leibler (KL) divergence between the true posterior $p\sbra{\m{V}|\m{z}}$ and the variational approximation $q\sbra{\m{V}}$. A commonly used variational distribution $q\sbra{\m{V}}$ is such that disjoint groups of variables are independent, i.e., $q\sbra{\m{V}}$ has a factorized form
$q\sbra{\m{V}}=q\sbra{\m{x}}q\sbra{\m{e}}q\sbra{\m{\alpha}} q\sbra{\eta}$.
Variational message passing (VMP) is proposed in \cite{winn2006variational} for the variational inference using a message passing procedure on a graphical model. In VMP, the variational distributions $q\sbra{\m{x}}$, $q\sbra{\m{e}}$, $q\sbra{\m{\alpha}}$, $q\sbra{\eta}$ are iteratively updated to monotonically decrease the KL divergence and thus has guaranteed convergence. Readers are referred to \cite{winn2006variational} for more details of VMP.
The updates of $q\sbra{\m{x}}$, $q\sbra{\m{\alpha}}$, $q\sbra{\eta}$ are similar to those in \cite{pedersen2011sparse} because of the similarity between quantized and conventional CS. $q\sbra{\m{e}}$ is given complete flexibility in multi-bit CS as $q\sbra{\m{x}}$, $q\sbra{\m{\alpha}}$ and $q\sbra{\eta}$. We constrain $q\sbra{\m{e}}$ in 1-bit CS such that \equ{q\sbra{\m{e}}=\delta\sbra{\m{e}-\m{e}^0} \label{formu:posterior_e_1bit}}
due to a computational issue to be discussed in Remark \ref{rem:whyconstaine1-bit}, where $\delta\sbra{\cdot}$ is the delta function and $\m{e}^0\in\bR^M$ is to be estimated. Note that (\ref{formu:posterior_e_1bit}) is equivalent to the complete flexibility in the noise free case to be illustrated in Subsection \ref{sec:noisefreecase}.

\begin{rem} In the 1-bit case the convergence of the resulting algorithm is not a direct result of \cite{winn2006variational} due to the adoption of a degenerate distribution for $\m{e}$. Instead, we may consider $\m{e}$ as an unknown deterministic parameter in such a case. Then the resulting algorithm can be interpreted as a variational EM algorithm \cite{tzikas2008variational} and during the iterations the (marginal) likelihood $p\sbra{\m{z};\m{e}}$ is guaranteed to monotonically increase and thus convergence is guaranteed. Readers are referred to \cite{tzikas2008variational} for the details.
\end{rem}

\subsubsection{Updates of $q\sbra{\m{x}}$, $q\sbra{\m{\alpha}}$ and $q\sbra{\eta}$}
According to \cite{winn2006variational} we have that
\[\begin{split}q\sbra{\m{x}}
&\propto \exp\lbra{\inp{\ln p\sbra{\m{z}|\m{x},\m{e}}}_{q\sbra{\m{e}}}\inp{\ln p\sbra{\m{x}|\m{\alpha}}}_{q\sbra{\alpha}}} \\
&\propto \exp\lbra{-\frac{1}{2}\sbra{\m{x}-\m{\mu}}^T\m{\Sigma}^{-1}\sbra{\m{x}-\m{\mu}}}, \end{split}\]
and thus $q\sbra{\m{x}}$ is a Gaussian distribution $\cN\sbra{\m{x}|\m{\mu},\m{\Sigma}}$ with the mean $\m{\mu}$ and covariance $\m{\Sigma}$:
{\lentwo\equa{ \m{\mu}
&=& \inp{\m{x}}_{q\sbra{\m{x}}} =\sigma^{-2}\m{\Sigma}\m{A}^T \sbra{\m{z}-\inp{\m{e}}_{q\sbra{\m{e}}}}, \label{formu:update_mu}\\ \m{\Sigma}
&=& \sbra{\sigma^{-2}\m{A}^T\m{A} +\inp{\m{\Lambda}^{-1}}_{q\sbra{\m{\alpha}}}}^{-1}. \label{formu:update_Sigma}}
}For $\m{\alpha}$ we have
\[\begin{split} q\sbra{\m{\alpha}}
&\propto \exp\lbra{\inp{\ln p\sbra{\m{x}|\m{\alpha}}}_{q\sbra{\m{x}}} \inp{\ln p\sbra{\m{\alpha|\eta}}}_{q\sbra{\eta}}} \\
&\propto\prod_{n=1}^N \alpha_n^{\epsilon-\frac{3}{2}} \exp\lbra{-\frac{1}{2}\alpha_n^{-1}\inp{x_n^2}_{q\sbra{\m{x}}} -\alpha_n\inp{\eta}_{q\sbra{\eta}}} \end{split}\]
where $\inp{x_n^2}_{q\sbra{\m{x}}}=\mu_n^2+\Sigma_{nn}$. The expression on the right hand side is the product of generalized inverse Gaussian (GIG) PDFs and thus we have for any $i\in\bR$ \cite{jorgensen1982statistical}:
\equ{\inp{\alpha_n^i}_{q\sbra{\m{\alpha}}} = \sbra{\frac{\inp{x_n^2}_{q\sbra{\m{x}}}} {2\inp{\eta}_{q\sbra{\eta}}}}^{\frac{i}{2}}\frac{\cK_{\epsilon+i-\frac{1}{2}} \sbra{\sqrt{2\inp{\eta}_{q\sbra{\eta}} \inp{x_n^2}_{q\sbra{\m{x}}}}}} {\cK_{\epsilon-\frac{1}{2}} \sbra{\sqrt{2\inp{\eta}_{q\sbra{\eta}} \inp{x_n^2}_{q\sbra{\m{x}}}}}} \label{formu:update_alphamoments}}
where $\cK_{\nu}\sbra{\cdot}$ is the modified Bessel function of the second kind and order $\nu\in\bR$. The case of $i=-1$ in (\ref{formu:update_alphamoments}) gives the evaluation of $\inp{\m{\Lambda}^{-1}}_{q\sbra{\m{\alpha}}}$ used in (\ref{formu:update_Sigma}), and the case of $i=1$ gives the calculation of $\inp{\alpha_n}_{q\sbra{\m{\alpha}}}$ used in a later expression in (\ref{formu:update_eta}). The update of $q\sbra{\eta}$ can be shown to be
$q\sbra{\eta}=\Gamma\sbra{\eta|N\epsilon+c,\sum_{n=1}^N \inp{\alpha_n}_{q\sbra{\m{\alpha}}} +d}$.
The first moment of $\eta$ used in (\ref{formu:update_alphamoments}) is given as \equ{\inp{\eta}_{q\sbra{\eta}} = \frac{N\epsilon+c} {\sum_{n=1}^N \inp{\alpha_n}_{q\sbra{\m{\alpha}}} +d}. \label{formu:update_eta}}

\subsubsection{Update of $q\sbra{\m{e}}$ in multi-bit CS}
In multi-bit CS we have
\equ{\begin{split} q\sbra{\m{e}}
&\propto \exp\lbra{\inp{\ln p\sbra{\m{z}|\m{x},\m{e}}}_{q\sbra{\m{x}}}} p\sbra{\m{e}}\\
&\propto \exp\lbra{-\frac{1}{2}\sigma^{-2} \inp{\twon{\m{z}-\m{e}-\m{A}\m{x}}^2}_{q\sbra{\m{x}}}}I_{\m{e}}\sbra{\cD_e}\\
&\propto \exp\lbra{-\frac{1}{2}\sigma^{-2} \twon{\m{e}-\sbra{\m{z}-\m{A}\m{\mu}}}^2} I_{\m{e}}\sbra{\cD_e}, \end{split} \label{formu:posterior_e}}
where $I_{\m{e}}\sbra{\cD_e}$ is an indicator function that equals to 1 if $\m{e}\in\cD_e$ or 0 otherwise. Hence, $q\sbra{\m{e}}$ is the product of PDFs of truncated Gaussian distributions, i.e., for each $m=1,\cdots,M$, $q\sbra{e_m}$ is the PDF of a truncated Gaussian distribution. As a result, the first moment of $e_m$, $m=1,\cdots,M$, used in (\ref{formu:update_mu}) can be given in closed form after some derivations using the PDF $\phi\sbra{\cdot}$ and cumulative distribution function (CDF) $\Phi\sbra{\cdot}$ of a standard Gaussian distribution:
\equ{\inp{e_m}_{q\sbra{\m{e}}}
= \sigma\frac{\phi\sbra{l_{e_m}}-\phi\sbra{u_{e_m}}} {\Phi\sbra{u_{e_m}}-\Phi\sbra{l_{e_m}}} + \mu_{e_m}, \label{formu:update_em1stmoment}}
where $\mu_{e_m}=\sbra{\m{z}-\m{A}\m{\mu}}_m$, $l_{e_m}$ and $u_{e_m}$ satisfy that $\cD_{e_m}=\mbra{\sigma l_{e_m}+\mu_{e_m}, \sigma u_{e_m}+\mu_{e_m}}$ with $\cD_{e_m}$ denoting the domain of $e_m$, $\phi\sbra{u}=\frac{1}{\sqrt{2\pi}}\exp\lbra{-\frac{u^2}{2}}$ and $\Phi\sbra{u}=\int_{-\infty}^u\phi\sbra{t}\,dt$ for $u\in\bR$.

\begin{rem}
Consider the case where $q\sbra{\m{e}}$ is given the complete flexibility in 1-bit CS. Note that entries of a point in $\cD_e$ are no longer independent of each other in such a case, leading to that $q\sbra{\m{e}}$ is the PDF of a truncated multi-variable Gaussian distribution with $\m{e}$ constrained in a nonconvex set $\cD_e$ defined in (\ref{formu:domain_e_1bit}). As a result, the calculation of $\inp{\m{e}}_{q\sbra{\m{e}}}$ is in general computationally intractable in our considered CS problems where the dimension of $\m{e}$ is large. \label{rem:whyconstaine1-bit}
\end{rem}

\subsubsection{Update of $q\sbra{\m{e}}$ in 1-bit CS}
According to \cite{winn2006variational}, this is equivalent to finding an MAP estimator of $\m{e}$ with its posterior distribution defined in (\ref{formu:posterior_e}). So we have
\equ{\begin{split} \inp{\m{e}}_{q\sbra{\m{e}}}
&= \arg\max_{\m{e}\in\cD_e} \exp\lbra{-\frac{1}{2}\sigma^{-2} \twon{\m{e}-\sbra{\m{z}-\m{A}\m{\mu}}}^2}\\
&= \cP_{\cD_e}\sbra{\m{z}-\m{A}\m{\mu}}, \end{split}  \label{formu:update_e_1bit}}
where $\cD_e$ is defined in (\ref{formu:domain_e_1bit}) and $\cP_{\cD}\sbra{\m{v}}$ denotes a projection of a point $\m{v}$ onto a set $\cD$. The calculation of $\cP_{\cD_e}\sbra{\cdot}$ with the nonconvex set $\cD_e$ is provided in the following lemma.

\begin{lem} For a vector $\m{v}\in\bR^M$, let $\overline{\m{v}}=-\sgn\sbra{\m{z}}\odot\m{v}$. Denote $\cI$ the index set of all positive entries of $\overline{\m{v}}$. Let $\cI^c$ be its complementary set. If $\cI$ is nonempty, then let $\m{e}^*\in\bR^M$ with $\m{e}^*_{\cI}= \frac{\m{v}_{\cI}}{\twon{\m{v}_{\cI}}}$ and $\m{e}^*_{\cI^c}= \m{0}$. Otherwise, let $i_0=\arg\max_i\sbra{\overline{v}_i}$ and $\m{e}^*$ such that $e_{i_0}^*=-\sgn\sbra{z_{i_0}}$ and $e_i^*=0$ whenever $i\neq i_0$. Then $\m{e}^*=\cP_{\cD_e}\sbra{\m{v}}$ with $\cD_e$ as defined in (\ref{formu:domain_e_1bit}). \label{lem:calculateprojection}
\end{lem}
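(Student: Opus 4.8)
The plan is to characterize $\cP_{\cD_e}(\m{v})$ as the solution of the constrained least-squares problem $\min_{\m{e}} \twon{\m{e}-\m{v}}^2$ subject to $\m{e}\in\cD_e$, where $\cD_e=\{\m{e}: \sgn(\m{e})=-\sgn(\m{z}),\ \twon{\m{e}}=1\}$ from (\ref{formu:domain_e_1bit}). A convenient first move is to change variables to $\overline{\m{v}}=-\sgn(\m{z})\odot\m{v}$ and $\overline{\m{e}}=-\sgn(\m{z})\odot\m{e}$; since $\sgn(\m{z})$ is entrywise $\pm1$, this is an orthogonal (in fact sign-flip) transformation, so $\twon{\m{e}-\m{v}}=\twon{\overline{\m{e}}-\overline{\m{v}}}$, and the constraint $\m{e}\in\cD_e$ becomes $\overline{\m{e}}\succcurlyeq\m{0}$ (entrywise nonnegative) with $\twon{\overline{\m{e}}}=1$. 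So it suffices to project $\overline{\m{v}}$ onto the set $\{\m{w}\succcurlyeq\m{0}: \twon{\m{w}}=1\}$, i.e., the nonnegative part of the unit sphere, and then map back via another sign flip.

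Next I would expand $\twon{\overline{\m{e}}-\overline{\m{v}}}^2 = \twon{\overline{\m{e}}}^2 - 2\inp{\overline{\m{e}},\overline{\m{v}}} + \twon{\overline{\m{v}}}^2 = 1 - 2\inp{\overline{\m{e}},\overline{\m{v}}} + \twon{\overline{\m{v}}}^2$ using the norm constraint, so minimizing the distance is equivalent to maximizing $\inp{\overline{\m{e}},\overline{\m{v}}} = \sum_m \overline{e}_m \overline{v}_m$ over $\overline{\m{e}}\succcurlyeq\m{0}$, $\twon{\overline{\m{e}}}=1$. Since each $\overline{e}_m\ge 0$, any coordinate with $\overline{v}_m\le 0$ contributes nonpositively and can be set to zero without decreasing the objective (if $\overline{v}_m=0$ it is neutral; if $\overline{v}_m<0$ setting it to zero strictly helps while freeing up norm budget). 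Hence an optimal $\overline{\m{e}}$ is supported on $\cI=\{m:\overline{v}_m>0\}$. On that support, maximizing $\sum_{m\in\cI}\overline{e}_m\overline{v}_m = \inp{\overline{\m{e}}_\cI,\overline{\m{v}}_\cI}$ subject to $\twon{\overline{\m{e}}_\cI}=1$ is solved by Cauchy--Schwarz: the maximizer is $\overline{\m{e}}_\cI = \overline{\m{v}}_\cI/\twon{\overline{\m{v}}_\cI}$ (which is automatically entrywise positive since $\overline{\m{v}}_\cI\succ\m{0}$), giving value $\twon{\overline{\m{v}}_\cI}$. Mapping back through the sign flip, $\overline{\m{v}}_\cI = -\sgn(\m{z})_\cI\odot\m{v}_\cI$ and an extra flip gives $\m{e}^*_\cI = \m{v}_\cI/\twon{\m{v}_\cI}$, $\m{e}^*_{\cI^c}=\m{0}$, matching the lemma; note $\cI$ is also the set of positive entries of $\overline{\m{v}}$, consistent with the statement, and $\twon{\m{v}_\cI}=\twon{\overline{\m{v}}_\cI}\neq 0$ so the formula is well-defined.

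The remaining case is when $\cI=\emptyset$, i.e., $\overline{\m{v}}\preccurlyeq\m{0}$, so that no feasible $\overline{\m{e}}\succcurlyeq\m{0}$ achieves a positive inner product; here the objective $\inp{\overline{\m{e}},\overline{\m{v}}}$ is maximized (closest to zero, hence least negative) by concentrating all unit mass on the coordinate $i_0=\arg\max_i \overline{v}_i$ where $\overline{v}_{i_0}$ is largest (least negative), giving $\overline{\m{e}}=\m{e}_{i_0}$ the standard basis vector; one checks this is optimal by the same coordinate-shifting argument (moving mass off $i_0$ onto any other coordinate only makes the inner product more negative). Translating back yields $e^*_{i_0}=-\sgn(z_{i_0})$ and $e^*_i=0$ otherwise, as claimed. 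The main obstacle, such as it is, is the bookkeeping between $\m{v}$, $\overline{\m{v}}$, $\m{e}$, $\overline{\m{e}}$ and the two sign flips — it is easy to drop a sign — so I would be careful to verify at the end that the produced $\m{e}^*$ genuinely lies in $\cD_e$ (i.e., $\sgn(\m{e}^*)=-\sgn(\m{z})$ on its support and $\twon{\m{e}^*}=1$) in both cases; everything else is Cauchy--Schwarz plus a convexity-free elementary argument on the simplex-like feasible set.
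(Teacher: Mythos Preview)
Your approach is essentially the same as the paper's: the sign-flip change of variables $\m{w}=-\sgn(\m{z})\odot\m{e}$, the reduction of the projection to maximizing $\overline{\m{v}}^T\m{w}$ over $\{\m{w}\succeq\m{0},\ \twon{\m{w}}=1\}$, and the Cauchy--Schwarz argument for the case $\cI\neq\emptyset$ all match the paper exactly.

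The one place where you diverge is the case $\cI=\emptyset$. The paper proves $\overline{\m{v}}^T\m{w}\leq\max_i\overline{v}_i$ by induction on $M$: it first analyzes the $M=2$ case via a direct calculus computation on the quarter-circle (showing the maximum of $g(w_2)=\overline{v}_1\sqrt{1-w_2^2}+\overline{v}_2 w_2$ occurs at an endpoint), then reduces general $M$ to $M-1$ and then to $2$. Your ``coordinate-shifting'' argument (``moving mass off $i_0$ onto any other coordinate only makes the inner product more negative'') is the right intuition but is not quite a proof as stated: on the $\ell_2$ sphere, moving mass between two coordinates is nonlinear, and a local perturbation away from $\m{e}_{i_0}$ has a second-order term $-\tfrac{1}{2}\overline{v}_{i_0}w_j^2\geq0$ that competes with the first-order term $\overline{v}_j w_j\leq0$, so local monotonicity is not automatic. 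A clean one-line fix, simpler than the paper's induction, is
\[
\overline{\m{v}}^T\m{w}=\sum_i \overline{v}_i w_i \leq \big(\max_i \overline{v}_i\big)\sum_i w_i = \overline{v}_{i_0}\onen{\m{w}} \leq \overline{v}_{i_0}\twon{\m{w}} = \overline{v}_{i_0},
\]
using $\m{w}\succeq\m{0}$, $\onen{\m{w}}\geq\twon{\m{w}}$, and $\overline{v}_{i_0}\leq0$. With this in place your proof is complete and in fact slightly more economical than the paper's induction.
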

\begin{proof} See Appendix.
\end{proof}

\begin{figure}
  \centering
  \includegraphics[width=2in]{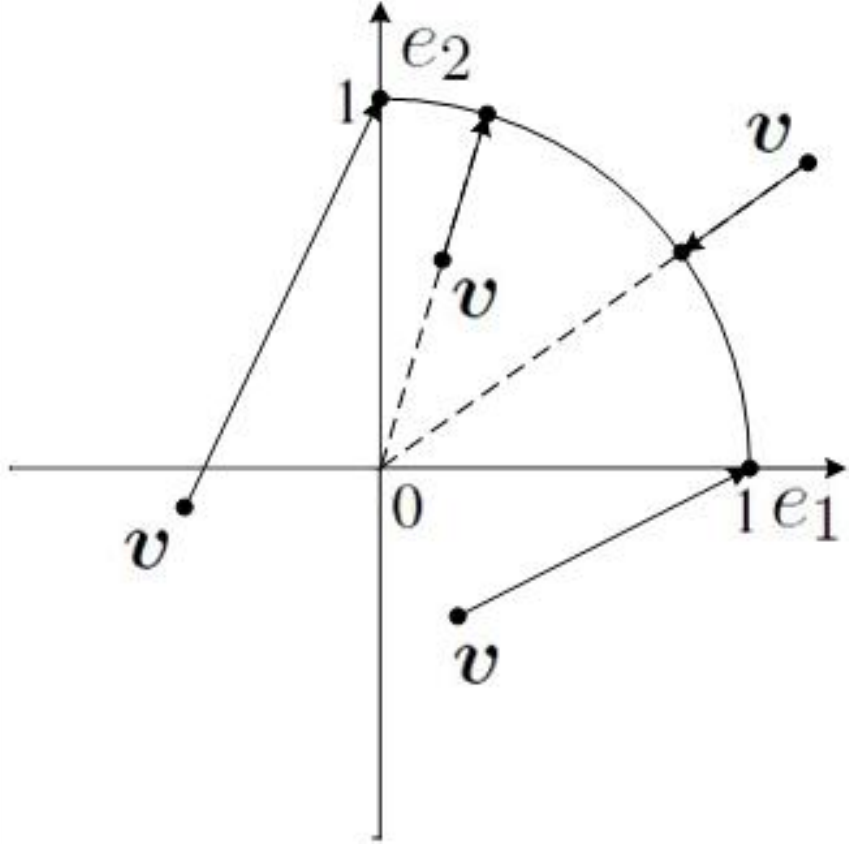}
  \caption{An illustration of Lemma \ref{lem:calculateprojection} with nonnegative entries of $\m{e}$. The unit circle in the first quadrant composes of $\cD_e$. Projections of four possible $\m{v}$'s are shown.}
  \label{Fig:ProjectionIllustration}
\end{figure}

Lemma \ref{lem:calculateprojection} tells how to calculate the projection onto the nonconvex set $\cD_e$ defined in (\ref{formu:domain_e_1bit}). An illustration of Lemma \ref{lem:calculateprojection} is presented in Fig. \ref{Fig:ProjectionIllustration}, where we consider the two dimensional case with both entries of $\m{e}$ nonnegative. The unit circle in the first quadrant composes of $\cD_e$. Projections of four possible $\m{v}$'s are shown. The resulting algorithm is summarized in Algorithm \ref{alg:Q-VMP}, named as variational message passing with quantization (Q-VMP).

\begin{algorithm}
\caption{Q-VMP}
Input: sensing matrix $\m{A}$, quantized measurement $\m{z}$, domain of quantization error $\cD_e$, and noise variance $\sigma^2$.\\
1. initialize $\inp{\alpha_n^{-1}}_{q\sbra{\m{\alpha}}}$, $n=1,\cdots,N$, $\inp{\eta}_{q\sbra{\eta}}$ and $\inp{\m{e}}_{q\sbra{\m{e}}}$;\\
2. \textbf{while} not converged \textbf{do} \\
3. $\quad$ update $\m{\Sigma}$ by (\ref{formu:update_Sigma});\\
4. $\quad$ update  $\m{\mu}$ by (\ref{formu:update_mu});\\
5. $\quad$ update $\inp{\alpha_n^{-1}}_{q\sbra{\m{\alpha}}}$ and $\inp{\alpha_n}_{q\sbra{\m{\alpha}}}$, $n=1,\cdots,N$, by (\ref{formu:update_alphamoments});\\
6. $\quad$ update $\inp{\eta}_{q\sbra{\eta}}$ by (\ref{formu:update_eta});\\
7. $\quad$ update $\inp{\m{e}}_{q\sbra{\m{e}}}$ by (\ref{formu:update_em1stmoment}) in multi-bit CS or by (\ref{formu:update_e_1bit}) in 1-bit CS; \\
8. \textbf{end} while\\
Output: recovered signal $\widehat{\m{x}}=\m{\mu}$. \label{alg:Q-VMP}
\end{algorithm}


\subsection{The Noise Free Case} \label{sec:noisefreecase}
In this subsection we consider Q-VMP in the noise free case. We first consider the data consistency. A consistent recovery means that the observation can be reproduced from the recovered signal. Empirical results suggest that a consistent recovery result in less errors \cite{jacques2011dequantizing,laska2011trust}. A theoretical proof is provided in \cite{jacques2011robust} on the 1-bit case. The following analysis applies to both multi- and 1-bit CS. Taking $\sigma^2\rightarrow0$ at both sides of (\ref{formu:update_mu}) and (\ref{formu:update_Sigma}) gives
\[\begin{split}\m{\mu}
&\rightarrow \overline{\m{\Lambda}}^{\frac{1}{2}} \sbra{\m{A}\overline{\m{\Lambda}}^{\frac{1}{2}}}^\dag \sbra{\m{z}-\m{e}},\\ \m{\Sigma}
&\rightarrow\overline{\m{\Lambda}}- \overline{\m{\Lambda}}^{\frac{1}{2}} \sbra{\m{A}\overline{\m{\Lambda}}^{\frac{1}{2}}}^\dag \m{A}\overline{\m{\Lambda}}, \end{split}\]
where $\overline{\m{\Lambda}}=\inp{\m{\Lambda}^{-1}}_{q\sbra{\m{\alpha}}}^{-1}$. Thus we have
$\m{A}\m{\mu}\rightarrow\m{z}-\m{e}\in\cD_y$, i.e., $\cQ\sbra{\m{A}\m{\mu}}\rightarrow\m{z}$, which indicates that the recovered signal reproduces the observation at each iteration.

We next consider the update of $q\sbra{\m{e}}$ in such a case. As $\sigma^2\rightarrow0$ we see that $q\sbra{\m{e}}$ degenerates into a single-point distribution by (\ref{formu:posterior_e}), that coincides with the stricter assumption in (\ref{formu:posterior_e_1bit}) in 1-bit CS.

\subsection{Pruning a Basis Function}
The most difficult computation of Q-VMP is the calculation of $\m{\Sigma}$ that is the inverse of an $N\times N$ matrix. Using the Woodbury matrix identity, we have
\[\m{\Sigma}=\inp{\m{\Lambda}^{-1}}_{q\sbra{\m{\alpha}}}^{-1} - \inp{\m{\Lambda}^{-1}}_{q\sbra{\m{\alpha}}}^{-1}\m{A}^T\m{C}^{-1}\m{A} \inp{\m{\Lambda}^{-1}}_{q\sbra{\m{\alpha}}}^{-1} \]
with $\m{C}=\sigma^{2}\m{I} +\m{A}\inp{\m{\Lambda}^{-1}}_{q\sbra{\m{\alpha}}}^{-1}\m{A}^T$ being an $M\times M$ matrix.
Hence, to calculate $\m{\Sigma}$ needs $O\sbra{\min\lbra{N^3,N^2M}}$ operations. It is noted that if Q-VMP produces some $\inp{\alpha_n^{-1}}_{q\sbra{\m{\alpha}}}\rightarrow+\infty$ with $n\in\lbra{1,\cdots,N}$, then the corresponding basis $\m{A}_n$ can be removed from the model. To further speed up Q-VMP, we prune a basis $\m{A}_n$ from the model (to reduce $N$) when the corresponding parameter $\inp{\alpha_n^{-1}}_{q\sbra{\m{\alpha}}}$ is larger than a certain threshold $\tau_{pruning}$. Similar basis pruning approaches have been used in \cite{tipping2001sparse,pedersen2011sparse}.

\section{Numerical Simulations} \label{sec:simulation}
In this section, we study the performance of the proposed observation model and Q-VMP algorithm in comparison with existing ones by numerical simulations.

\subsection{Experimental Setup}
\textbf{Quantizer:} In multi-bit CS, a uniform unsaturated quantizer is defined in (\ref{formu:quantizer}) with $L=2^B$, equispaced $u_0,u_1\cdots,u_{L}$ and $v_i=\sbra{u_i+u_{i+1}}/2$, $i=1,\cdots,L-1$. In addition, we let $u_L=\inftyn{\m{y}}$ and $u_0=-\inftyn{\m{y}}$ in each trial. For a saturated quantizer, we set $u_0=-\infty$, $u_L=+\infty$.

\textbf{CS problem generation:} In our experiment, we set $N=500$, $K=10$, and vary the bit budget (total bits of all measurements) in $\lbra{50,100,\cdots,1000}$. In each trial, a $K$-sparse signal of length $N$ is generated with Gaussian distributed nonzero entries and then scaled to unit norm. Entries of the sensing matrix $\m{A}$ are generated independently according to a Gaussian distribution $\cN\sbra{0,M^{-1}}$. Thus the noise free measurement $\m{y}^0=\m{A}\m{x}$ has unit norm in expectation. To obtain a desired SNR, a white Gaussian measurement noise $\m{n}$ is added with the noise variance
$\sigma^2=M^{-1}10^{-\frac{\text{SNR}}{10}}$. The quantized measurement $\m{z}=\cQ\sbra{\m{y}}$ is preserved for the following signal recovery.

\textbf{Performance metrics:} Three metrics are considered, including reconstruction SNR (RSNR), sparsity level of the recovered signal and computational speed. RSNR is defined as
$\text{RSNR}=-20\log_{10}\twon{\m{x}-\widehat{\m{x}}}$,
where $\widehat{\m{x}}$ denotes the recovered signal of $\m{x}$. The sparsity level is measured by the support size of the recovered signal. The computational speed is measured by the CPU time usage. All results are averaged over 200 trials.

\subsection{Model Efficiency}
We first study the efficiency of the observation model in (\ref{formu:ObservationModel}) introduced in this paper for quantized CS. We consider the multi-bit CS problem with a uniform unsaturated quantizer as an example. In existing methods that account for measurement noise, e.g., in \cite{laska2011democracy}, the quantization error and the noise are typically coupled and treated as a Gaussian noise (only the energy information is used). Then the quantized CS problem is transformed into a conventional one. We refer to this formulation as existing method hereafter. In this subsection we compare the signal recovery performance of the proposed formulation in (\ref{formu:ObservationModel}) with the existing one. Naturally, we use the proposed Q-VMP algorithm for our formulation. A corresponding algorithm for the existing formulation is thus VMP introduced in \cite{pedersen2011sparse} for conventional CS. The latter algorithm can be considered as a simplified version of Q-VMP with the quantization error $\m{e}$ fixed throughout the algorithm. In addition, we also present the performance of oracle-aided conventional CS in which the true-valued measurements are used and thus whose performance acts as an upper boundary of the quantized CS problem.

In our experiment, we set $\text{SNR}=30$dB and the bit depth $B=4$ which leads to the number of quantized measurements varying from 12 to 250. In Q-VMP, we initialize $\inp{\alpha_n^{-1}}_{q\sbra{\m{\alpha}}}=1/\abs{\m{A}_n^T\m{z}}$, $n=1,\cdots,N$, $\inp{\eta}_{q\sbra{\eta}}=1$ and $\inp{\m{e}}_{q\sbra{\m{e}}}=\m{0}$. We set $\tau_{pruning}=10^{4}$. Q-VMP is terminated if $\frac{\twon{\tilde{\m{\alpha}}^j-\tilde{\m{\alpha}}^{j-1}}}{\twon{\tilde{\m{\alpha}}^{j-1}}}<10^{-5}$ or the maximum number of iterations, set to 2000, is reached, where $\tilde{\m{\alpha}}=\mbra{\inp{\alpha_1^{-1}}_{q\sbra{\m{\alpha}}}^{-1}, \cdots, \inp{\alpha_N^{-1}}_{q\sbra{\m{\alpha}}}^{-1}}^T$ and the superscript $j$ indicates the iteration. The VMP algorithm for the other two cases is similarly implemented. The true noise variance is used in Q-VMP and conventional CS. For VMP with the existing formulation, we set the noise variance to ${r^2}/{12}+\sigma^2$ where $r=u_1-u_0$ denotes the quantization bin width. This value corresponds to a Gaussian noise whose energy is comparable with that of $\m{e}+\m{n}$ under the assumption that $\m{e}$ is uniformly distributed and independent of $\m{n}$.

Reconstruction SNRs of the three methods are depicted in Fig. \ref{Fig:RSNR_model_efficiency}. It can be seen that the VMP algorithm based on the proposed observation model is consistently better than that with the existing formulation though it is worse than the oracle-aided one. So it confirms that the proposed model and framework improve the signal recovery accuracy by decoupling the quantization error from measurement noise.

\begin{figure}
\centering
\includegraphics[width=3.5in]{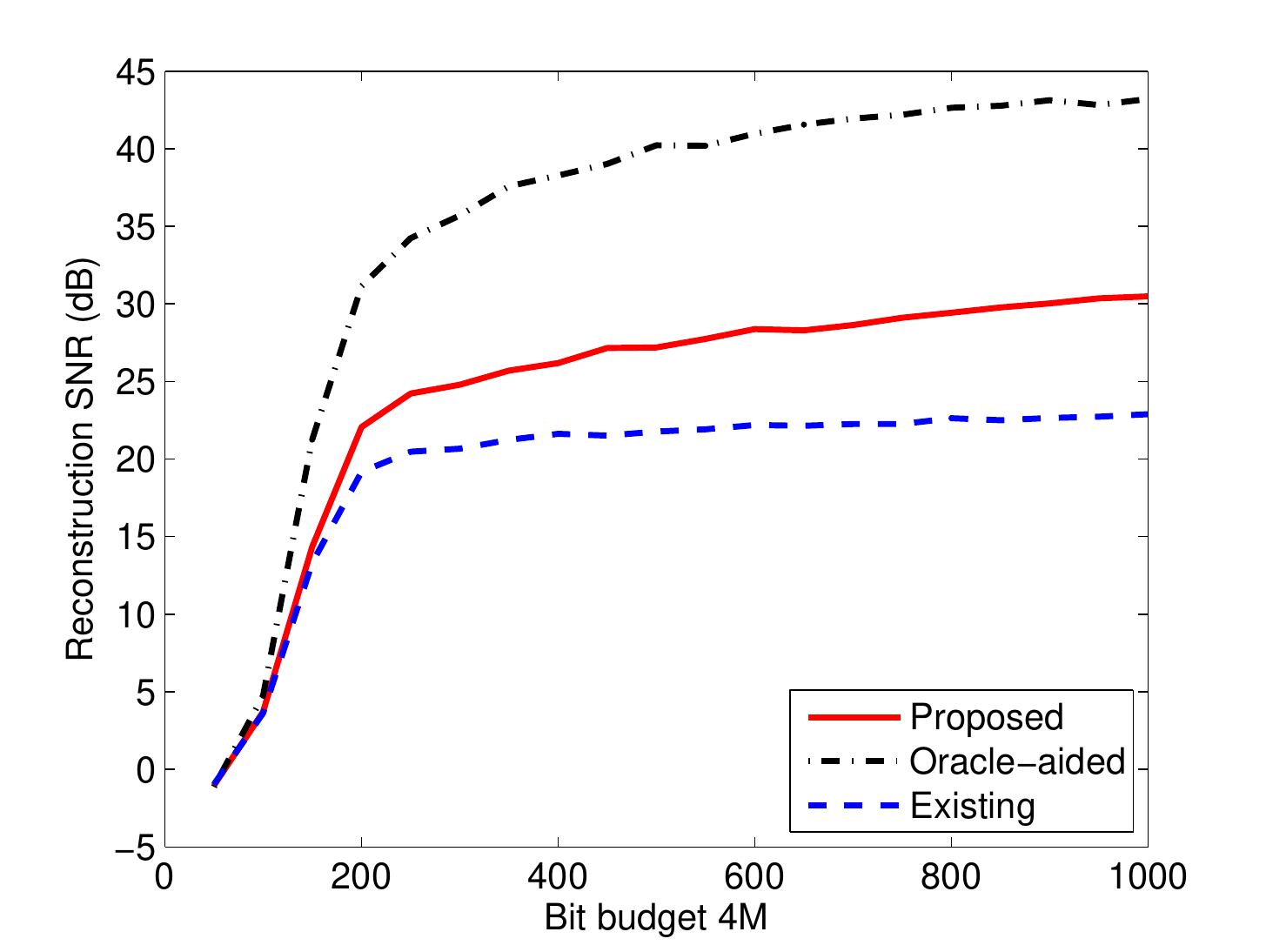}
\centering
\caption{Reconstruction SNRs of VMP algorithms implemented respectively based on the proposed observation model in (\ref{formu:ObservationModel}), an existing one that couples the quantization error and measurement noise, and conventional CS (oracle-aided quantized CS) as an upper boundary.}
\label{Fig:RSNR_model_efficiency}
\end{figure}

\subsection{Performance Comparison in Multi-bit CS} \label{sec:simulation_multibit}

\subsubsection{Unsaturated quantizer}

\begin{figure*}
\centering
  \subfigure[]{
    \label{Fig:RSNR_multibit_SNR30}
    \includegraphics[width=2.33in]{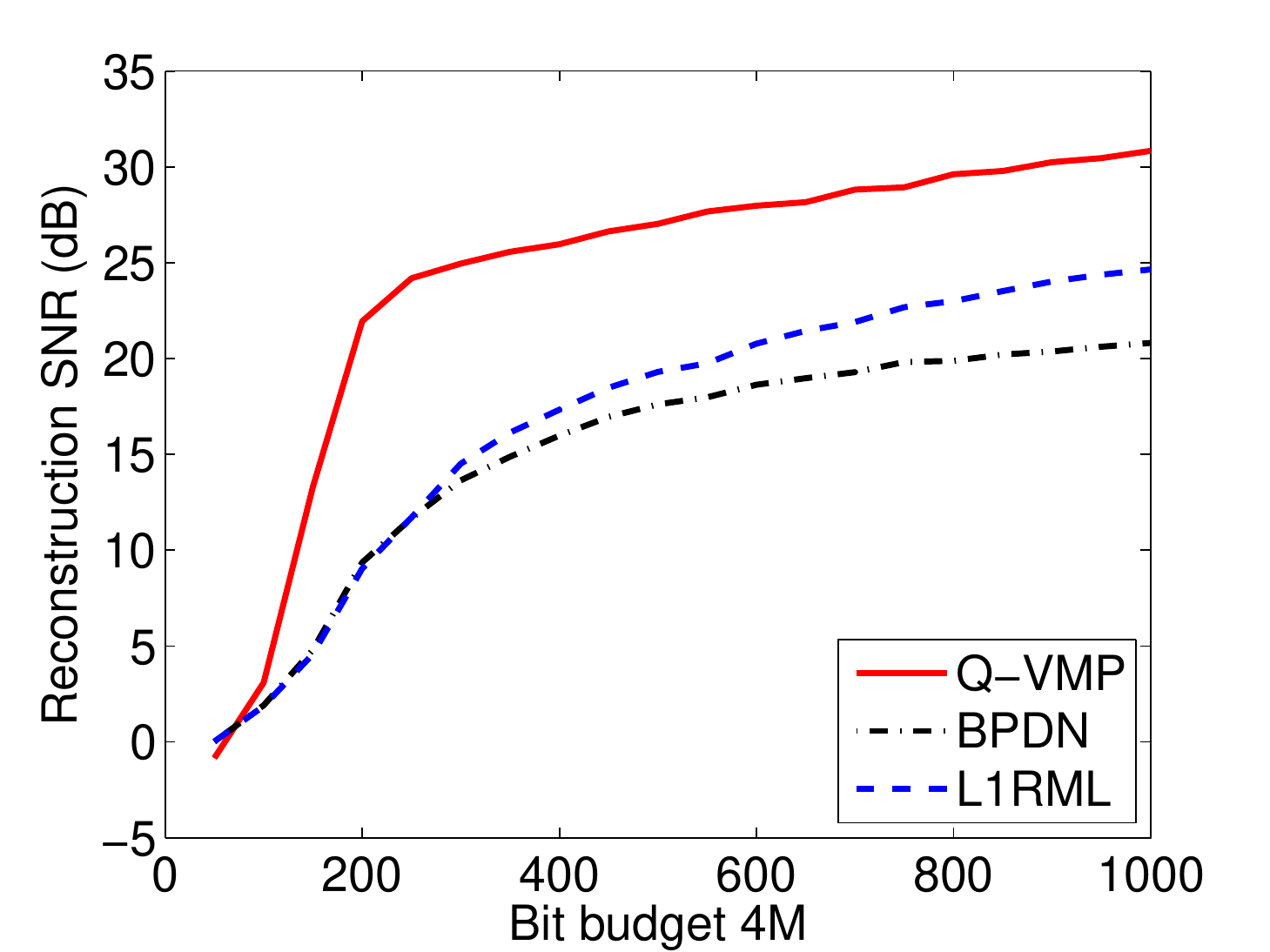}}%
  \subfigure[]{
    \label{Fig:SuppSize_multibit_SNR30}
    \includegraphics[width=2.33in]{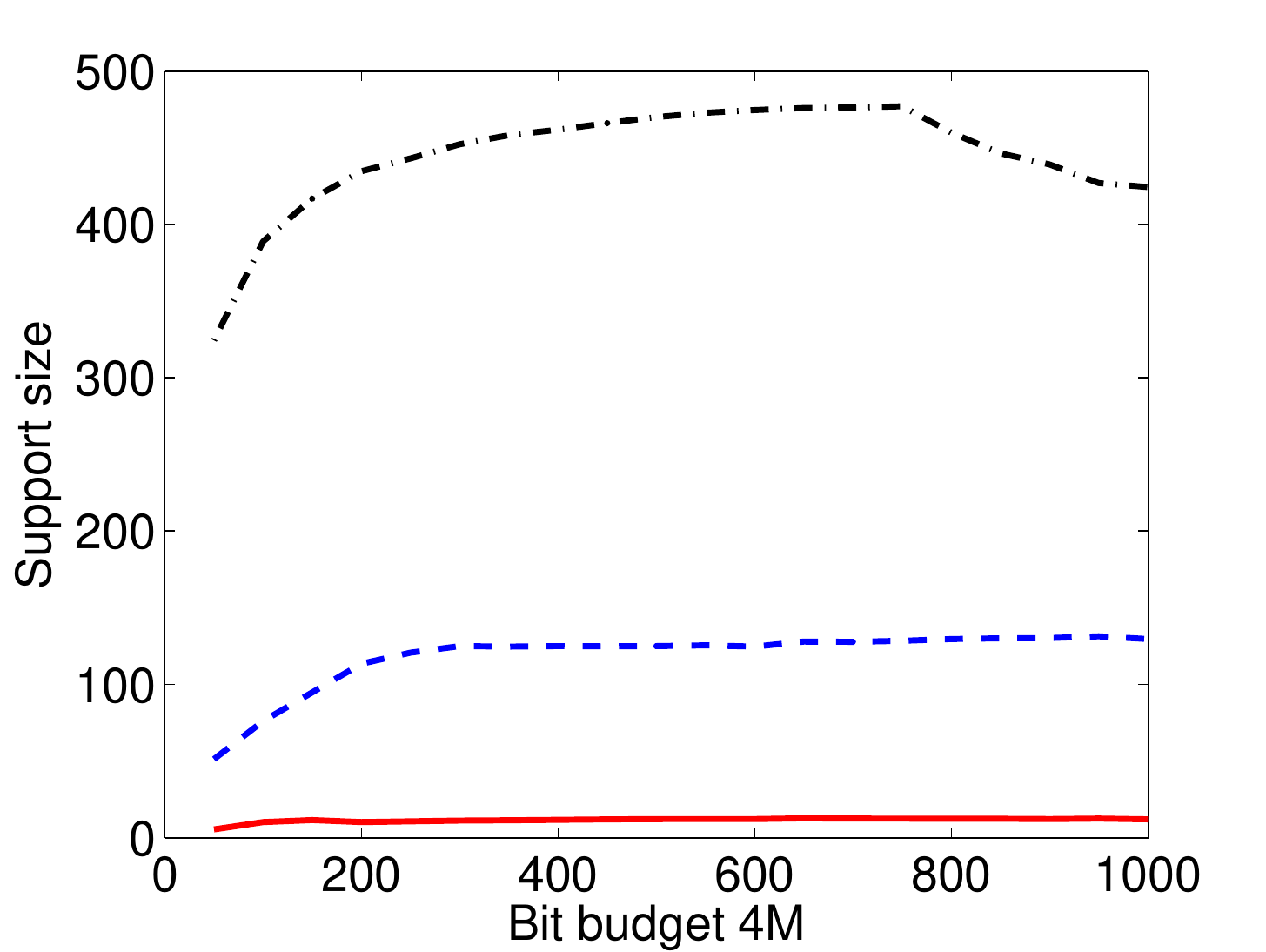}}%
  \subfigure[]{
    \label{Fig:Time_multibit_SNR30}
    \includegraphics[width=2.33in]{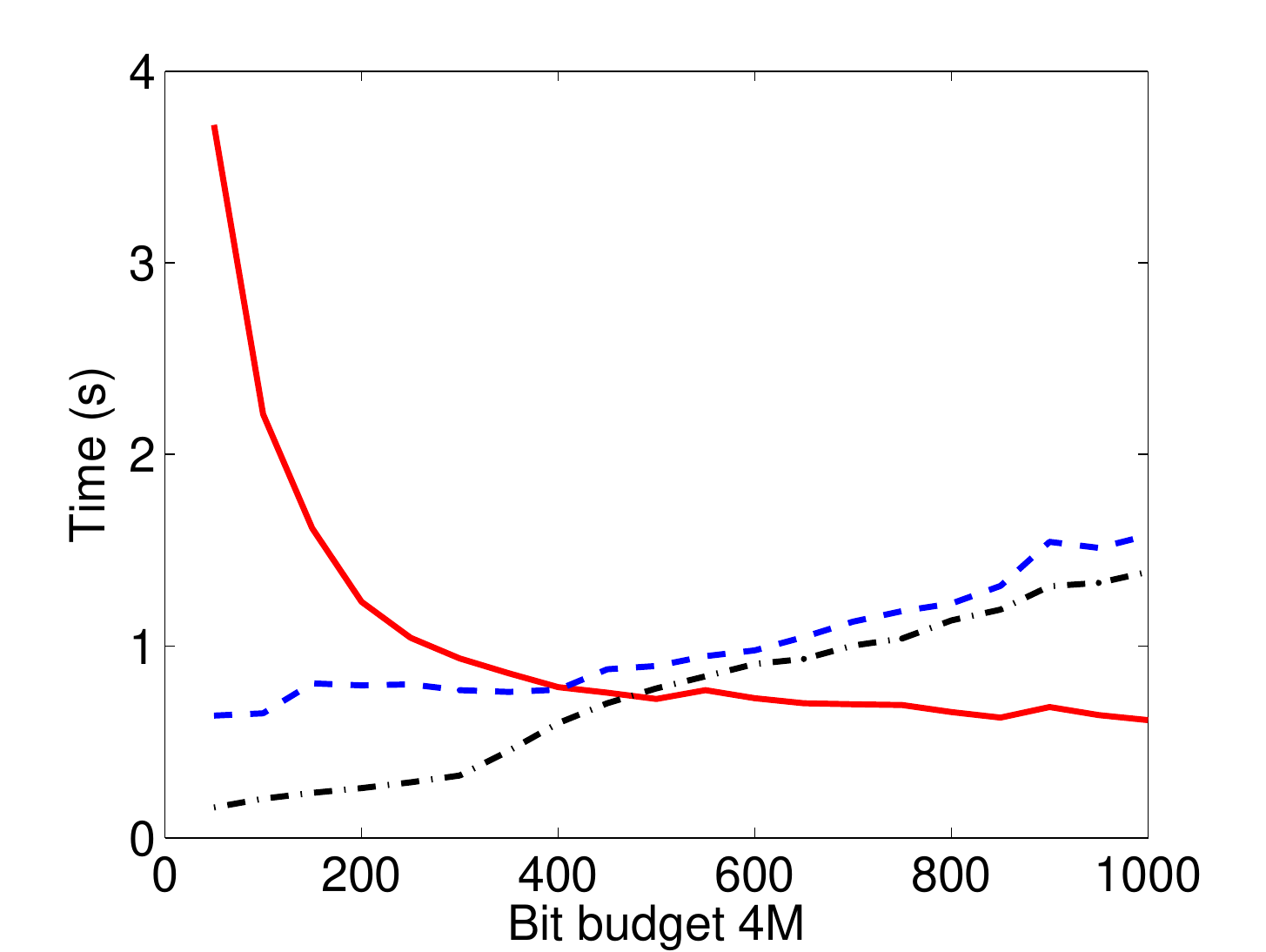}}
\centering
\caption{Performance comparison of Q-VMP, BPDN and L1RML with bit depth $B=4$. $\text{SNR}=30$dB. (a) Averaged reconstruction SNR; (b) Averaged support size of recovered signal; (c) Averaged CPU time.} \label{Fig:comparison_multibit}
\end{figure*}

In multi-bit CS, we first consider the case of a uniform unsaturated quantizer. As in the last subsection, we set $\text{SNR}=30$dB and $B=4$. Besides Q-VMP, we also use BPDN \cite{candes2006stable} and L1RML \cite{zymnis2010compressed} to recover the signal for comparison. Q-VMP is implemented as in the last subsection. BPDN solves the problem in (\ref{formu:BPDN}) with $\m{y}$ replaced by $\m{z}$
and is implemented using $\ell_1$-magic (available at http://users.ece.gatech.edu/$\sim$justin/l1magic). We set $\epsilon=\twon{\m{z}-\m{A}\m{x}}$ for achieving the best result though this value is unavailable in practice. In L1RML the regularization parameter is tuned such that it produces an estimate with the optimal RSNR. Additionally, we set $\tau=\frac{\sigma^2}{\twon{\m{A}}^2}$, $\epsilon=10^{-4}$ and $\beta=0.5$. Readers are referred to \cite{zymnis2010compressed} for their interpretations.

The experimental results are shown in Fig. \ref{Fig:comparison_multibit}, where red solid lines denote Q-VMP, black dashed dot lines denote BPDN, and blue dashed lines denote L1RML. Fig. \ref{Fig:RSNR_multibit_SNR30} depicts the averaged reconstruction SNRs of the three algorithms. A significant improvement of the reconstruction SNR can be observed using the proposed Q-VMP. It is over 6dB in comparison with L1RML and about an amplitude for BPDN. Moreover, Fig. \ref{Fig:SuppSize_multibit_SNR30} shows that Q-VMP produces the sparsest solution. Note that L1RML can produce a sparser solution by setting a larger regularization parameter \cite{zymnis2010compressed} but at the cost of a lower RSNR. Fig. \ref{Fig:Time_multibit_SNR30} shows that the speed of Q-VMP is comparable with that of BPDN and L1RML. Implemented with the basis pruning approach, Q-VMP is faster when more measurements are acquired since it is observed in such a case that the basis pruning approach works more efficiently.

\subsubsection{Saturated Quantizer}
We next consider the case of a saturated quantizer. We adopt the same experimental setup but a saturated quantizer where a noisy measurement falls in each quantization interval with the same probability. Since both the sensing matrix and measurement noise are Gaussian in the experiment, the noisy measurements are i.i.d. Gaussian $\cN\sbra{0,M^{-1}+\sigma^2}$. Then it is easy to get the quantizer. As a result, $12.5\%$ of the measurements are saturated in expectation. BPDN is inappropriate in such a case. We compare Q-VMP only with L1RML. The averaged reconstruction SNRs of Q-VMP and L1RML are presented in Fig. \ref{Fig:RSNR_multibit_saturationORnot} (red solid lines). Q-VMP obtains a RSNR of about 10dB higher than L1RML when sufficient measurements are acquired. The performance of the two algorithms on support size and speed is similar to that in the uniform quantizer case and is omitted.

\begin{figure}
\centering
\includegraphics[width=3.5in]{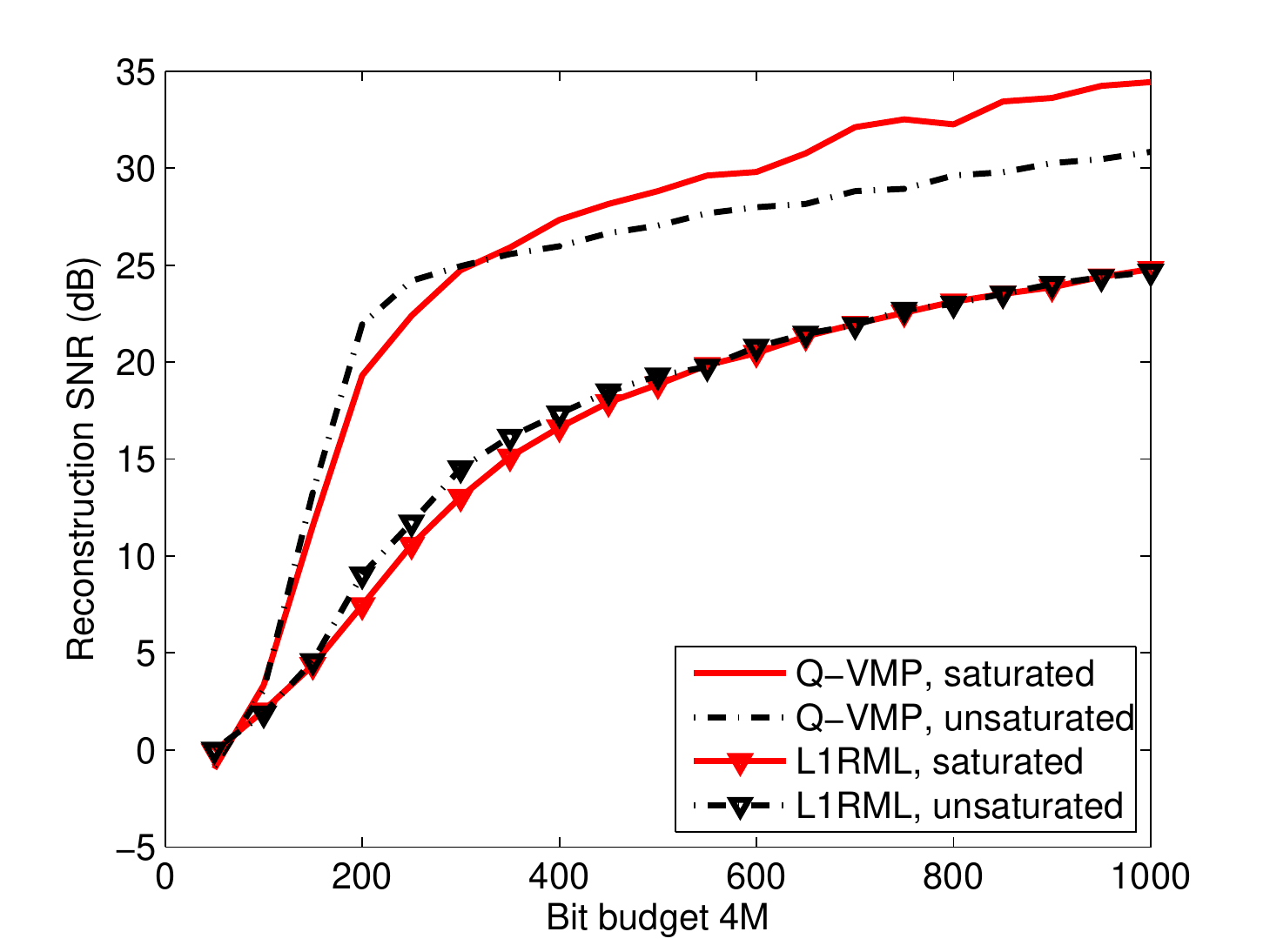}
\centering
\caption{Reconstruction SNRs of Q-VMP and L1RML with a saturated quantizer, as well as those with the unsaturated quantizer in Fig. \ref{Fig:comparison_multibit}.}
\label{Fig:RSNR_multibit_saturationORnot}
\end{figure}

The experiment above may shed light on the optimal quantizer design for Q-VMP. By comparing the performance of Q-VMP in the two quantizer scenarios, it can be seen from Fig. \ref{Fig:RSNR_multibit_saturationORnot} that the saturated quantizer outperforms the uniform unsaturated one when more measurements are taken for Q-VMP while it is not so clear for L1RML. We pose the problem of the optimal quantizer design for Q-VMP as a future work.

\subsection{Performance Comparison in 1-bit CS}
\begin{figure*}
\centering
  \subfigure[]{
    \label{Fig:RSNR_SNR10}
    \includegraphics[width=2.33in]{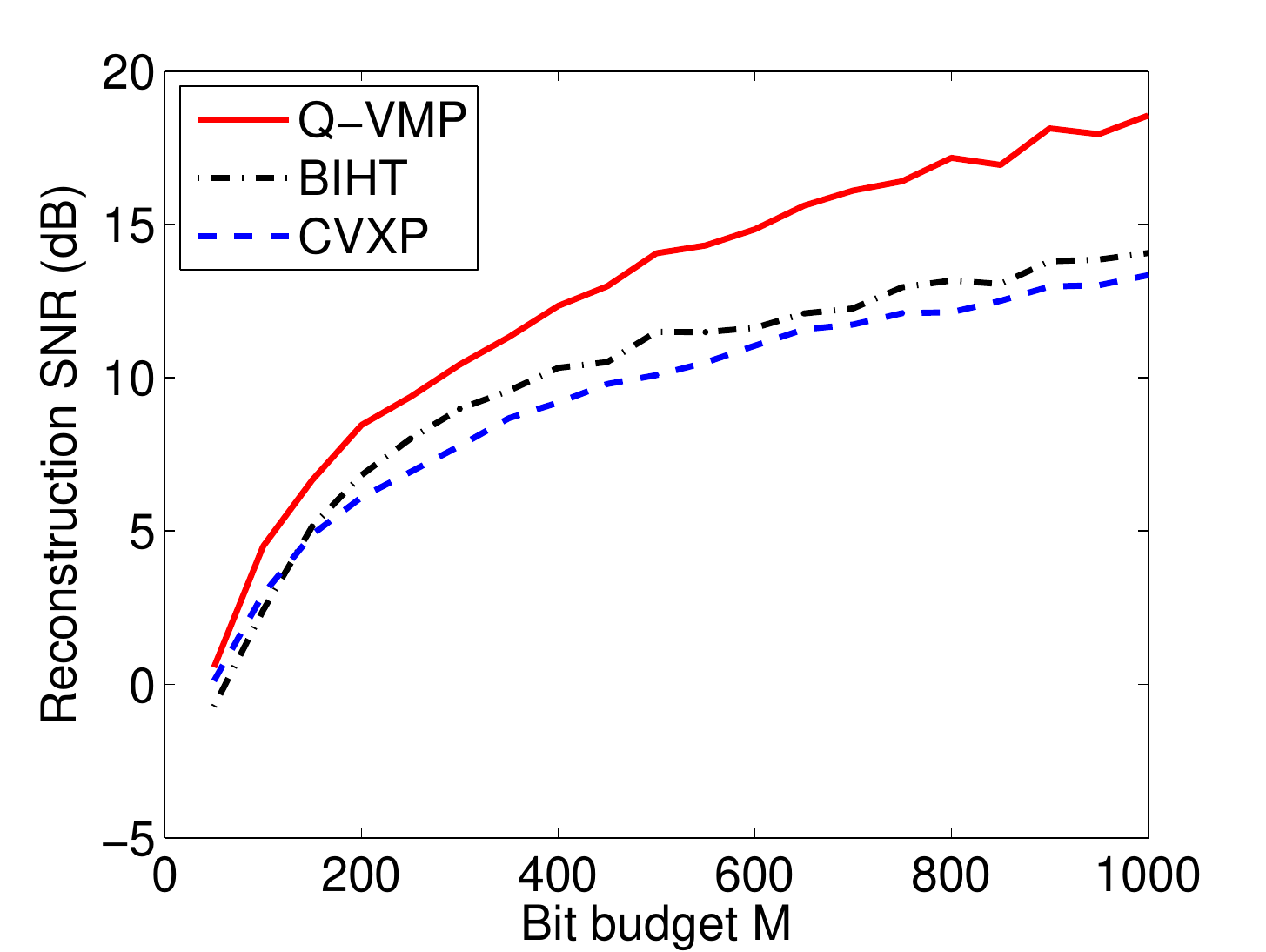}}%
  \subfigure[]{
    \label{Fig:SuppSize_SNR10}
    \includegraphics[width=2.33in]{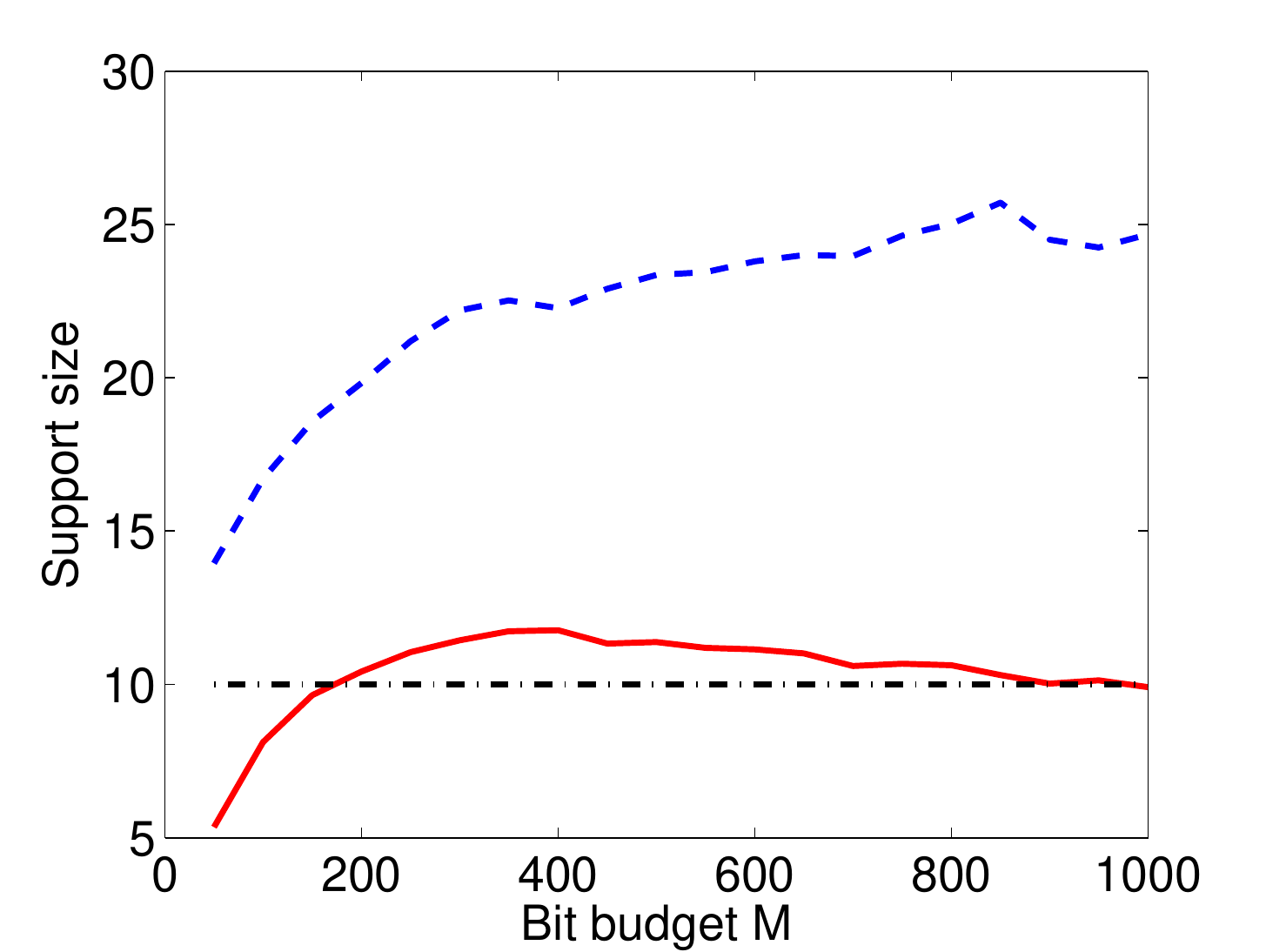}}%
  \subfigure[]{
    \label{Fig:Time_SNR10}
    \includegraphics[width=2.33in]{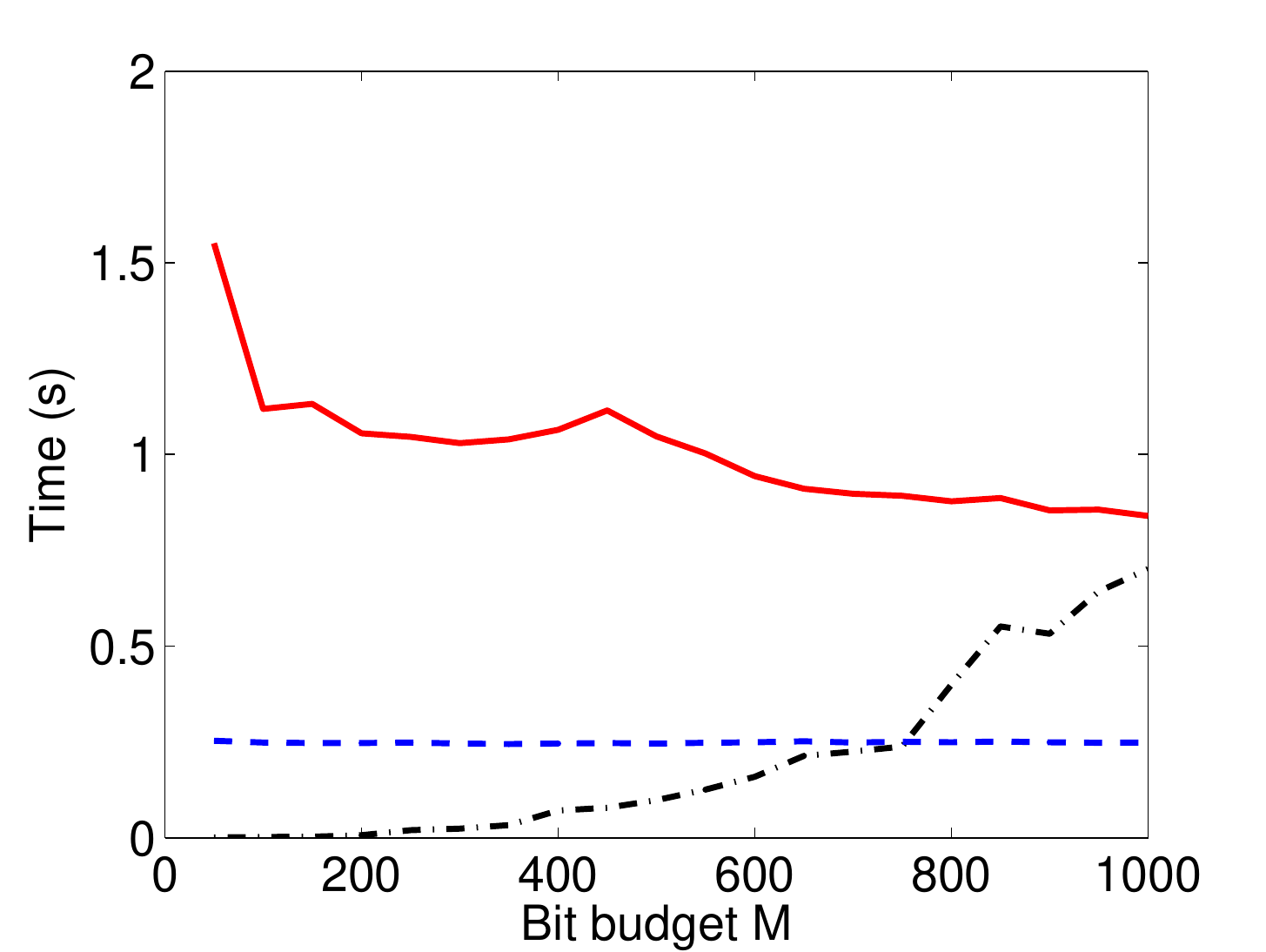}}
\centering
\caption{Performance comparison of Q-VMP, BIHT and CVXP in 1-bit CS. $\text{SNR}=10$dB. (a) Averaged reconstruction SNR; (b) Averaged support size of recovered signal; (c) Averaged CPU time.} \label{Fig:comparison_1bit}
\end{figure*}

The bit-depth $B=1$ in 1-bit CS. We set $\text{SNR}=10$dB. In such a case, $9.75\%$ measurements flip their signs due to the noise in expectation. We compare Q-VMP with the state-of-the-art algorithms BIHT \cite{jacques2011robust} and the convex programming approach in \cite{plan2012robust}, denoted by CVXP. The three algorithms are implemented as follows. In Q-VMP, We initialize $\inp{\alpha_n^{-1}}_{q\sbra{\m{\alpha}}}=\sqrt{M}/\abs{\m{A}_n^T\sgn\sbra{\m{z}}}$, $n=1,\cdots,N$, $\inp{\eta}_{q\sbra{\eta}}=1$ and $\inp{\m{e}}_{q\sbra{\m{e}}}=-\sgn\sbra{\m{z}}/\sqrt{M}$. As addressed in Remark \ref{rem:robust_1bit}, the effective noise level in 1-bit CS is much lower than the true one. We empirically find that it is a good choice to set the noise variance in Q-VMP to $10^{-3}\sigma^2$. We set $\tau_{pruning}=10^{4}$ and terminate Q-VMP as in multi-bit CS. The recovered signal is finally scaled to unit norm for comparison with the original one. For BIHT the oracle information of $K$ is used, i.e., BIHT is certain to return a reconstruction with $K$ nonzero entries. It is terminated if the Hamming error (see \cite{jacques2011robust}) of the current recovery is below the expected Hamming error or the maximum number of iterations, set to 1000, is reached. For CVXP the oracle information of $\onen{\m{x}}$ is used for achieving the best result and CVX \cite{grant2008cvx} is used for its implementation.

Our experimental results are presented in Fig. \ref{Fig:comparison_1bit}, where red solid lines denote Q-VMP, black dashed dot lines denote BIHT, and blue dashed lines denote CVXP. It is shown in Fig. \ref{Fig:RSNR_SNR10} that the proposed Q-VMP outperforms consistently the other two algorithms in the recovery accuracy. From Fig. \ref{Fig:SuppSize_SNR10}, it can be seen that Q-VMP produces a sparser solution than CVXP while BIHT uses this oracle information. Fig. \ref{Fig:Time_SNR10} shows that the computational speed is a disadvantage of Q-VMP.

\section{Conclusion and Future Work} \label{sec:conclusion}
The problem of sparse signal recovery from noisy quantized compressive measurements was studied in this paper. A Bayesian framework was presented that unifies the multi- and 1-bit CS problems and is applicable to the noisy environment and/or saturated quantizer. An algorithm was proposed based on variational Bayesian inference under the proposed framework. Numerical simulations were provided to demonstrate its improved signal recovery accuracy over the existing results.

A convex formulation of the noisy 1-bit CS problem has been studied in \cite{plan2012robust} with guaranteed signal recovery performance. This paper has introduced a different convex formulation (problem (\ref{formu:opt_1-bit}) with $f\sbra{\m{x}}=\onen{\m{x}}$ and $s=1$) that explicitly exploits the noise information and does not need the knowledge of the signal sparsity. One future work is to explore its theoretical guarantee. One drawback of Q-VMP is its high computational complexity due to an inversion of a high dimensional matrix at each iteration though it has been greatly alleviated with a basis pruning approach adopted in this paper. Thus another future work is to develop fast alternatives to the current implementation. Since the signal recovery accuracy in multi-bit CS is very different when a different quantizer is adopted, as shown in the present paper and in \cite{laska2011democracy}, to design the optimal quantizer that minimizes the signal recovery error is another interesting future research topic.

\section*{Appendix: Proof of Lemma \ref{lem:calculateprojection}}
It is easy to show the following equivalences:
\equ{\begin{split}&\m{e}^* = \cP_{\cD_{e}}\sbra{\m{v}}\\
&\Leftrightarrow\m{e}^* = \arg\min_{\m{e}\in\cD_e}\twon{\m{e}-\m{v}}\\
&\Leftrightarrow\m{e}^* = \arg\max_{\m{e}\in\cD_e} \m{v}^T\m{e}\\
&\Leftrightarrow -\sgn\sbra{\m{z}}\odot\m{e}^* = \arg\max_{\m{w}} f\sbra{\m{w}}=\overline{\m{v}}^T\m{w},\\
&\qquad\st \twon{\m{w}}=1 \text{ and }\m{w}\succeq\m{0}.
\end{split}}

1) $\cI$ is nonempty. Note that $\twon{\m{w}_{\cI}}\leq1$ and $\overline{\m{v}}_{\cI^c}\preceq\m{0}$. By the Cauchy inequality,
\equ{\begin{split}f\sbra{\m{w}}
&=\overline{\m{v}}_{\cI}^T\m{w}_{\cI} + \overline{\m{v}}_{\cI^c}^T\m{w}_{\cI^c} \\ &\leq\twon{\m{w}_{\cI}}\twon{\overline{\m{v}}_{\cI}} + \overline{\m{v}}_{\cI^c}^T\m{w}_{\cI^c}\\
&\leq\twon{\overline{\m{v}}_{\cI}}.\end{split}}
It is readily verified that the equality holds if $\m{w}$ is in the form of $-\sgn\sbra{\m{z}}\odot\m{e}^*$.

2) $\cI$ is empty, i.e., $\overline{\m{v}}\preceq\m{0}$. We prove the following result: $f\sbra{\m{w}}\leq\twon{\m{w}}\max\sbra{\overline{\m{v}}}=\max\sbra{\overline{\m{v}}}$. It is obvious that the equality holds if $\m{w}$ is in the form of $-\sgn\sbra{\m{z}}\odot\m{e}^*$.

The case of $M=1$ is trivial. We next prove the case of $M=2$ and then use induction to complete the proof. When $M=2$, substitute $w_1=\sqrt{1-w_2^2}$ into $f\sbra{\m{w}}$ and then
\equ{g\sbra{w_2}:=f\sbra{\sqrt{1-w_2^2},w_2}=\sqrt{1-w_2^2}\overline{v}_1+w_2\overline{v}_2.}
It is easy to show that $g'\sbra{w_2}\leq0$ if $0\leq w_2\leq\frac{\abs{\overline{v}_2}}{\twon{\overline{\m{v}}}}$, and $g'\sbra{w_2}\geq0$ if $\frac{\abs{\overline{v}_2}}{\twon{\overline{\m{v}}}}\leq w_2<1$. So the maximum of $g\sbra{w_2}$ can only be obtained at the boundary of the interval $\mbra{0,1}$, i.e., $f\sbra{\m{w}}\leq\max\sbra{g\sbra{0}, g\sbra{1}}=\max\sbra{\overline{\m{v}}}$.

Suppose the lemma holds when $M=n-1$ with $n>3$. We next show that it holds when $M=n$. Denote $\m{w}_{-1}=\mbra{w_2,\cdots,w_n}^T$, $\overline{\m{v}}_{-1}=\mbra{\overline{v}_2,\cdots,\overline{v}_n}^T$. By $w_1^2+\twon{\m{w}_{-1}}^2=\twon{\m{w}}^2$ and applying the results when $M=n-1$ and $M=2$ consecutively,
\equ{\begin{split}f\sbra{\m{w}}
&=\overline{v}_1w_1+\overline{\m{v}}_{-1}^T\m{w}_{-1}\\
&\leq \overline{v}_1w_1+\max\sbra{\overline{\m{v}}_{-1}}\twon{\m{w}_{-1}}\\
&\leq \twon{\m{w}}\max\sbra{\overline{\m{v}}}=\max\sbra{\overline{\m{v}}}.\end{split}}

\bibliographystyle{IEEEtran}

\end{document}